\documentclass[aps,prd,twocolumn,superscriptaddress,nofootinbib,longbibliography]{revtex4-2}

\usepackage[T1]{fontenc}
\usepackage[utf8]{inputenc}
\usepackage{microtype}
\usepackage{amsmath,amssymb,mathtools,bm,amsthm}
\usepackage{graphicx}
\usepackage{booktabs}
\usepackage{xcolor}

\usepackage[colorlinks=true, urlcolor=blue, linkcolor=blue, citecolor=red, hyperindex=true, linktocpage=true]{hyperref}

\newcommand{\Tr}{\operatorname{Tr}}
\newcommand{\dist}{\operatorname{dist}}
\newcommand{\norm}[1]{\left\lVert #1\right\rVert}
\newcommand{\abs}[1]{\left|#1\right|}
\newcommand{\one}{\mathbf 1}
\newcommand{\eps}{\varepsilon}
\newcommand{\E}{\mathbb E}
\newcommand{\diam}{\operatorname{diam}}
\newcommand{\cent}[1]{#1^{\circ}}

\newtheorem{theorem}{Theorem}[section]
\newtheorem{lemma}[theorem]{Lemma}

\theoremstyle{remark}

\begin{document}

\title{From Lyapunov Growth to Weak Quantum Chaos}

\author{Sa\v{s}o Grozdanov}
\affiliation{Faculty of Mathematics and Physics, University of Ljubljana, Jadranska 19, SI-1000 Ljubljana, Slovenia}
\affiliation{Higgs Centre for Theoretical Physics, University of Edinburgh, Edinburgh, EH8 9YL, Scotland, U.K.}

\author{Toma\v{z} Prosen}
\affiliation{Faculty of Mathematics and Physics, University of Ljubljana, Jadranska 19, SI-1000 Ljubljana, Slovenia}
\affiliation{Institute of Mathematics, Physics and Mechanics, Jadranska 19, SI-1000 Ljubljana, Slovenia}


\begin{abstract}
Local out-of-time-order correlators (OTOCs) measure the growth of initially small commutators. For bounded local operators, this growth ends when the local commutator becomes of order one, but in an extended system the region over which it is large can continue to expand. We study this spatial proliferation using three related observables: the local squared commutator $C_r(t)$, its spatial sum $\bar C(t) = \sum_r C_r(t)$, and the symmetrically regularised connected dOTOC $D(t)$ of two extensive observables. We show that Lieb--Robinson locality gives $\bar C(t) \lesssim t^d$. We then prove that, in a Gibbs state with uniform exponential clustering, the full dOTOC obeys the same bound, $D(t) \lesssim t^d$, improving our earlier suboptimal $t^{3d}$ bound. We illustrate these results in two quantum many-body systems: a Brownian coupled-cluster model and a finite-temperature SYK chain. In both cases, an exact disorder symmetry relates $D(t)$ to $\bar C(t)$ and  allows us to follow the same extensive observable from the pre-scrambling Lyapunov regime into the post-scrambling growth associated with an expanding
saturated region. This late-time power-law growth is the concrete
realisation of the weak quantum chaos introduced in our earlier work.
\end{abstract}

\maketitle

\section{Introduction}
\label{sec:introduction}

Chaos is the phenomenon whereby small differences in initial conditions make information about the initial state increasingly difficult to recover at later times. In recent years, out-of-time-ordered correlation functions (OTOCs) have emerged as a particularly useful local measure of this loss of information in quantum many-body systems, quantum field theories and the physics of black holes~\cite{SekinoSusskind2008,ShenkerStanford2014,Roberts:2014isa,Shenker:2014cwa,Maldacena:2015waa,Stanford:2015owe,Aleiner:2016eni,Nahum2018,vonKeyserlingk2018,Rakovszky:2017qit,XuSwingle2019,Swingle2018,Grozdanov:2017ajz,Grozdanov:2018atb,Craps:2019rbj,Craps:2020ahu,Jahnke:2018off,Xu:2022vko,Grozdanov:2018kkt}.  In their usual local form, they quantify the growth of an initially small perturbation and, in systems with a parametrically long scrambling regime, can exhibit exponential behaviour governed by a quantum Lyapunov exponent.

The goal of this work is to revisit the temporal growth of OTOCs before and after local scrambling.  We will distinguish a local OTOC, its spatial integral (sum), and the dOTOC (density of OTOC) of extensive operators introduced in Ref.~\cite{Kukuljan:2017xag}.  The distinction becomes important once a local commutator has become order one. Its local amplitude can then no longer grow, while the region over which it remains large may continue to increase.  The original motivation to introduce the dOTOC was precisely as a diagnostic of chaos (weak quantum chaos) in systems in which the local dimension of Hilbert space, or the rank of the gauge group in a quantum field theory, does not permit late-time growth of a local OTOC~\cite{Kukuljan:2017xag} (see also Refs.~\cite{Lin:2018tce,Keselman:2020fmo}).

Let us first discuss standard OTOCs. Consider local Hermitian operators $w_0$ and $v_r$, where the subscripts denote their spatial locations. We work in a thermal state $\rho=e^{-\beta H}/Z$, where $\beta$ is the inverse temperature.  Following Ref.~\cite{Maldacena:2015waa}, define the normalised out-of-time-ordered four-point function
\begin{equation}
F_r(t) \equiv 
\frac{\Tr\!\left(\rho^{1/4} v_r \rho^{1/4} w_0(t) \rho^{1/4} v_r \rho^{1/4} w_0(t)\right)}
{\Tr\!\left(\rho^{1/2} v_r \rho^{1/2} v_r\right)\Tr\!\left(\rho^{1/2} w_0 \rho^{1/2} w_0\right)}.
\label{eq:F-def}
\end{equation}
At zero separation, $r=0$, we write $F(t) =F_0(t)$.  For a large-$N$ theory, Maldacena, Shenker and Stanford (MSS) showed in Ref.~\cite{Maldacena:2015waa} that, in an exponential regime, the `four-point deficit' $1-F(t)$ behaves as
\begin{equation}
1-F(t)\sim \eps_N e^{\lambda_L t},
\qquad
\lambda_L\le \frac{2\pi}{\beta},
\qquad
\eps_N\ll1,
\label{eq:mss-window}
\end{equation}
when the assumptions specified in Ref.~\cite{Maldacena:2015waa} hold.  This gives the upper bound on the quantum Lyapunov exponent $\lambda_L$.  Throughout the paper, $N$ denotes the large internal parameter controlling a parametrically long scrambling regime. Depending on the model, it can be a gauge-group rank, a number of flavours, or the number of microscopic degrees of freedom in a cluster.

A closely related local quantity is the connected symmetrically regularised squared commutator
\begin{align}
C_r(t)
&\equiv\Tr\!\left[\rho^{1/2}\left(\cent{[w_0(t),v_r]}\right)^\dagger
\rho^{1/2} \cent{[w_0(t),v_r]}\right],
\nonumber  \\
\cent A&  =  A-\Tr(\rho A)\one.
\label{eq:C-def}
\end{align}
At finite temperature, it is useful to distinguish between $F_r$ and $C_r$.  The MSS argument applies directly to the analytic four-point function $F_r(z)$.\footnote{For important subtleties regarding the choice of the regularisation in the definition of  $F_r(t)$, see Ref.~\cite{Romero-Bermudez:2019vej}.} The quantity $C_r(t)$ is the squared commutator and is more directly suited to following the spatial spreading of the initial disturbance. It is also the form of the OTOC used in the early studies of quantum chaos~\cite{Larkin:1969,Kitaev:2015holography}. The distinction can be important dynamically.  In particular, the symmetric commutator need not have overlap with the fastest mode present in $F_r$.  In Sachdev--Ye--Kitaev (SYK)-type  systems, for example,  the maximally chaotic contribution can cancel from the commutator square (see Ref.~\cite{BlakeLiu2021}).

Importantly, in theories with finite $N$, the exponential regime must end when the local commutator becomes order one. If $t_*(N)$ is defined parametrically by $\eps_N e^{\lambda_L  t_*}\sim1$, then $t_*\to\infty$ as $N\to\infty$. Taking $N=\infty$ first at fixed time removes the saturation regime and leaves only the pre-scrambling expansion. In this work, we will be interested in the large but finite $N$ evolution through $t_*(N)$, or,  equivalently,  in a correlated large-$N$ limit in which $t$ scales with $t_*(N)$.

It is useful to keep in mind that local exponential growth alone does not establish spatially extensive scrambling. If the same local dynamics is spatially decoupled, the local OTOC at each site can be unchanged while a disturbance never reaches another site. Once the local commutator saturates, an extensive commutator per unit volume then saturates as well. In an interacting extended system, the region in which the local commutator is order one can keep growing after its amplitude has saturated.

To make this distinction precise, consider a periodic lattice $\Lambda$ of $L=|\Lambda|$ sites in $d$ spatial dimensions and the extensive operators
\begin{equation}
W = \sum_x w_x,  \qquad
V =  \sum_y v_y,
\label{eq:WV}
\end{equation}
where $w_x$ and $v_y$ are bounded Hermitian operators of fixed finite support. When the two sums are resolved into local pairs, we refer to $x$ as the source position and $y$ as the target position. We use
\begin{align}
\langle A  \rangle_\beta  &=\Tr(\rho A),
\nonumber\\
\langle A,B\rangle_{\beta}   
&=\Tr\!\left(\rho^{1/2}A^\dagger\rho^{1/2}  B\right),
\qquad
\norm{A}_{\beta}^{2} =\langle A,A    \rangle_{\beta},
\label{eq:KMS-inner}
\end{align}
with the centring operation $\cent A$ defined in Eq.~\eqref{eq:C-def}. It is convenient to write
\begin{align}
M_{xy}(t)  &\equiv  \cent{[w_x(t),v_y]}, \nonumber\\
C_{xy}(t)  &\equiv \norm{M_{xy}(t)}_\beta^2 .
\label{eq:MC-def}
\end{align}
In a translationally invariant system, $C_r(t) = C_{0r}(t)$.

Next, we consider the symmetrically regularised finite-temperature version of the connected dOTOC introduced in Ref.~\cite{Kukuljan:2017xag},
\begin{align}
D^{(L)}(t) &\equiv \frac{1}{L} \norm{\cent{[W(t),V]}}_{\beta}^{2}
\nonumber\\
&=  \frac{1}{L}\left[
\norm{[W(t),V]}_{\beta}^{2}
-  \abs{\langle[W(t),V]\rangle_\beta}^{2}
\right].
\label{eq:dOTOC-def}
\end{align}
At $\beta=0$, the trace of a commutator vanishes and Eq.~\eqref{eq:dOTOC-def} coincides with the original dOTOC. We keep the finite-volume label $D^{(L)}$ when useful and write $D(t) =\lim_{L\to\infty}D^{(L)}(t)$ for the thermodynamic limit, whenever it exists. The bounds below also control the thermodynamic limsup without assuming convergence.

There is a simple way to expand the dOTOC. Define the spatially integrated local OTOC
\begin{equation}
\bar{C}^{(L)}(t)=\frac1L\sum_{x,y}C_{xy}(t).
\label{eq:Csum-def}
\end{equation}
Translation invariance gives, in the thermodynamic limit,
\begin{equation}
\bar{C}(t) =\sum_r C_r(t).
\label{eq:Csum-translation}
\end{equation}
The full dOTOC is then 
\begin{equation}
D^{(L)}(t) 
= \bar{C}^{(L)} (t) + \frac{1}{L} \!\sum_{(x,y)\neq(x',y')}
\langle M_{xy}(t), M_{x'y'}(t)\rangle_\beta.
\label{eq:D-C-cross}
\end{equation}
Thus, $\bar  C$ sums the local pair contributions, while $D$ also contains their cross terms. If $C_r (t)$ reaches an order-one plateau over a region of radius $R(t)$, and the total exterior contribution is $O(R(t)^d)$, then $\bar C(t)=\Theta(R(t)^d)$.  We will refer to this spatial volume as the scrambling volume.  The dOTOC contains, in addition, the zero-momentum coherence between different local pairs.  Related uses of global OTOCs as probes of operator growth were emphasised in Refs.~\cite{Rakovszky:2017qit,ZhouSwingle2023}.

In Ref.~\cite{Kukuljan:2017xag}, we proved a general bound $D(t)\lesssim t^{3d}$ and used the resulting non-exponential late-time behaviour of the dOTOC to motivate the notion of weak quantum chaos. The results here sharpen this bound and also clarify the origin of the power-law growth. Under the assumptions stated below, we find
\begin{equation}
D(t)  \lesssim t^d.
\label{eq:intro-main-bound}
\end{equation}
Here and throughout the paper, notation $A(t)\lesssim B(t)$ means the following: $\exists K>0$, such that $A(t) < K B(t)$ for sufficiently large $t$.\footnote{We also use standard asymptotic notation. In particular, $f=O(g)$ means $|f|\le K|g|$ for some constant $K$, $f=o(g)$ means $f/g\to0$, and $f=\Theta(g)$ means $K_1|g|\le|f|\le K_2|g|$ for some constants $0<K_1\le K_2$. Thus, $O(1)$ denotes a bounded quantity, $o(t)$ a quantity growing slower than linearly, and $\Theta(t^d)$ a quantity growing exactly as $t^d$ up to constants.} Real-time locality then also gives the same bound for the spatially integrated local commutator-squared OTOC, i.e., $\bar C(t)\lesssim t^d$.  An expanding region of order-one local commutators, with controlled exterior weight, is a sufficient mechanism for saturating the $t^d$ bound on $\bar C$. In the disorder-averaged examples below, this interpretation also applies to $D$ because $\E_J D=\E_J\bar C$. This provides a natural late-time measure of extensive scrambling after the local exponential growth has ended.  In the examples below, the same dOTOC has a Lyapunov regime before the scrambling time, and continues to grow through the spatial proliferation of the saturated commutator afterwards. This gives a concrete interpretation of the weak quantum chaos introduced in Ref.~\cite{Kukuljan:2017xag}. Two explicit many-body models are then studied at the end of this work in which such behaviour can be observed.  The Brownian model realises this crossover explicitly in $d=1$ within the deterministic approximation, while the corresponding late-time result for the SYK chain is conditional on the all-orders eikonal approximation discussed below.

The rest of the paper is organised as follows. In Section~\ref{sec:one-volume}, we prove the one-volume bounds on $\bar C$ and $D$. In Section~\ref{sec:lyapunov-volume}, we discuss the evolution from a pre-scrambling Lyapunov regime to post-scrambling volume growth and derive a spatially integrated version of the MSS bound. Section~\ref{sec:examples} contains the Brownian coupled-cluster and SYK-chain examples. The technical proofs and model details are collected in the appendices.

\section{One-volume bounds from locality and equilibrium clustering}
\label{sec:one-volume}

In order to establish our main bounds on the growth of OTOCs, we work with a finite periodic lattice  $\Lambda=(\mathbb{Z} /\ell \mathbb{Z})^d$, with $L=|\Lambda|=\ell^d$.  We assume that the Hamiltonian $H=\sum_{x\in\Lambda} h_x$ is generated by a uniformly bounded finite-range interaction $h_x$, or more generally, by a short-range interaction for which a standard exponential Lieb--Robinson bound holds~\cite{LiebRobinson1972,NachtergaeleSims2006}.  Furthermore, local densities $h_x$ are assumed to have a fixed-size finite support (containing $x$).  For discussing the full dOTOC, we additionally assume uniform Gibbs clustering which we write in the following form.
There are constants $K_\beta,c_{\rm cl}>0$ and a fixed polynomial $P_{\rm cl}(X,Y)$ in the sizes and diameters of the supports $X,Y\subset \Lambda$ such that, for all finite volumes and local operators $a_X$ and $b_Y$,
\begin{align}
& \abs{\langle a_X b_Y   \rangle_\beta
-\langle a_X \rangle_\beta\langle b_Y  \rangle_\beta}
\nonumber\\
&\le
 K_\beta P_{\rm cl}(X,Y)
\norm{a_X}_\infty\norm{b_Y}_\infty
  e^{-c_{\rm cl}\dist(X,Y)}.
\label{eq:Gibbs-clustering-main}
\end{align}
For the local densities $w_x$ and $v_y$ themselves, whose supports have fixed size, this polynomial factor is a constant. However, we retain it here because later, we apply the clustering bound to Lieb--Robinson truncations whose supports grow with the truncation radius. In $d=1$,  where Eq.~\eqref{eq:Gibbs-clustering-main} follows from Araki's theorem~\cite{Araki1969}, no such polynomial is needed.  In the high-temperature regime of Ref.~\cite{Kliesch2014}, however, the prefactor of the clustering bound does depend on the supports.  This is why we keep $P_{\rm cl}$ explicit.

\begin{theorem}[One causal volume]
\label{thm:one-volume}
Under the assumptions above, there are constants $K,K_\beta,v_{\rm LR},R_0$, independent of $L$ and $t$, such that with
\begin{equation}
R_{\rm LR}(t)=R_0 +  v_{\rm LR}|t|+1 ,
\label{eq:RLR-main}
\end{equation}
one has
\begin{align}
\bar{C}^{(L)} (t)  &\le K\min\{R_{\rm LR}(t)^d ,  L \},
\label{eq:Csum-bound-main}\\
D^{(L)}(t)  &\le K_\beta \min\{R_{\rm LR}(t)^d ,L\}.
\label{eq:dOTOC-bound-main}
\end{align}
The first bound only uses real-time locality. The dOTOC bound additionally uses the clustering condition~\eqref{eq:Gibbs-clustering-main}.
\end{theorem}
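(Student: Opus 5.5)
The plan is to treat the two bounds separately, starting with $\bar C^{(L)}$, which only needs Lieb--Robinson locality and the trivial bound $C_{xy}(t)\le 4\norm{w}_\infty^2\norm{v}_\infty^2$.

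\textbf{Bound on $\bar C^{(L)}$.} First, note that centring cannot increase the KMS norm. Since $\langle \one, A\rangle_\beta=\Tr(\rho A)$, centring is the orthogonal projection off $\one$, so $\norm{\cent A}_\beta\le\norm{A}_\beta\le\norm{A}_\infty$. Hence
\[
C_{xy}(t)\le\norm{[w_x(t),v_y]}_\infty^2\le \min\{4\norm{w}^2\norm{v}^2,\;c\,e^{-2\mu(\dist(x,y)-v_{\rm LR}|t|)}\}.
\]
For fixed $x$, I will sum over $y$. Sites with $\dist(x,y)\le R_{\rm LR}(t)$ number $O(R_{\rm LR}^d)$ and each contributes $O(1)$. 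The remaining sites give $\sum_{n>R_{\rm LR}} n^{d-1}e^{-2\mu(n-v_{\rm LR}|t|)}=O(R_{\rm LR}^{d-1})$. Dividing by $L$ after summing over $x$ gives $\bar C^{(L)}\le K R_{\rm LR}^d$. The alternative bound $\le K L$ follows because the sum over $y$ has at most $L$ terms, each $O(1)$. Here $R_0$ absorbs the support sizes.

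\textbf{Bound on $D^{(L)}$.} I will avoid expanding into cross terms and instead localise the commutator. By Lieb--Robinson, replace $w_x(t)$ by a truncation $w_x^{R}(t)$ supported on the ball $B_R(x)$, with $R=R_{\rm LR}(t)$ and error $\norm{w_x(t)-w_x^R(t)}_\infty\le c\,e^{-\mu(R-v_{\rm LR}|t|)}$. Then
\[
[W(t),V]\approx\sum_x A_x,\qquad A_x=[w_x^R(t),V]=\Big[w_x^R(t),\textstyle\sum_{y\in B_{R+r_0}(x)}v_y\Big],
\]
so $A_x$ is supported on $B_{R+r_0}(x)$ and satisfies $\norm{A_x}_\infty\le c R^d$.

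The truncation errors need care. I will take the truncation radius slightly larger, $R=R_{\rm LR}(t)+\kappa\log R_{\rm LR}$, or absorb it into $R_0$. Summed over $y$ weighted by distance, the error in $[W(t),V]$ has operator norm $O(L)$ times an exponentially small factor. This gives an error contribution to $D$ of order $L\cdot$(small), which I would bound more carefully by keeping the exact commutator $[w_x(t),v_y]$ and truncating only far pairs. The cleaner route is: $\cent{[W(t),V]}=\sum_x \cent{B_x}$ with $B_x=[w_x(t),V]$, and $\norm{B_x}_\infty\le\sum_y\norm{[w_x(t),v_y]}\le cR^d$ by the same sum as in the $\bar C$ bound. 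Each $B_x$ is approximated within $\delta_n\sim e^{-\mu n}$ by an operator on $B_{R+n}(x)$.

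The key estimate is then
\[
D^{(L)}\le\frac1L\sum_{x,x'}\abs{\langle \cent{B_x},\cent{B_{x'}}\rangle_\beta}.
\]
For pairs with $\dist(x,x')\lesssim 2R$ I will use Cauchy--Schwarz. Since $\norm{\cent{B_x}}_\beta^2=\sum_{y,y'}\langle M_{xy},M_{xy'}\rangle_\beta$, this is bounded not by $R^{2d}$ but by using the fact that $\norm{B_x}_\beta$ itself is $O(R^{d/2})$? This is the \textbf{main obstacle}. The naive bound $\norm{B_x}\le R^d$ gives $D\lesssim R^{3d}$, which is the old $t^{3d}$ result. To get $t^d$, I will need $\norm{\cent{B_x}}_\beta^2=O(1)$ uniformly, and then show that $\sum_{x'}\abs{\langle\cent{B_x},\cent{B_{x'}}\rangle_\beta}=O(R^d)$.

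For the first, I would exploit $B_x=[w_x(t),V]=e^{iHt}[w_x,V(-t)]e^{-iHt}$. By KMS invariance $\norm{e^{iHt}Xe^{-iHt}}_\beta=\norm{X}_\beta$, and $V(-t)$ is extensive, so this is not immediately small. Instead I will write $\norm{\cent{B_x}}_\beta^2\le\norm{[w_x(t),\cent V]}_\beta^2\le 2\norm{w_x(t)\cent V}_\beta^2+2\norm{\cent V w_x(t)}_\beta^2$. Each of these is at most $\norm{w}_\infty^2\,\norm{\cent V}_{\beta}^2$-type quantities of order $L$, which is too weak. So for the first estimate I would use locality of $w_x(t)$: only $v_y$ with $y\in B_R(x)$ matter, giving $\norm{\cent{B_x}}_\beta\le 2\norm{w}\,\norm{\cent{V_{B_{2R}(x)}}}_\beta$ plus exponentially small corrections. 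Clustering \eqref{eq:Gibbs-clustering-main} then gives $\norm{\cent{V_X}}_\beta^2=\sum_{y,y'\in X}\mathrm{cov}=O(|X|)=O(R^d)$.

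For the second, I will bound the cross terms with $\dist(x,x')>4R$ by clustering applied to the truncated supports. These are covariances of operators separated by distance $\gtrsim\dist(x,x')-2R$, with prefactor $P_{\rm cl}\cdot R^{2d}$ times $e^{-c_{\rm cl}(\dist-2R)}$, and this is summable over $x'$ giving $O(R^{2d}\,\mathrm{poly}(R))$. This is still too large, so I would instead combine the two estimates. The near pairs, at most $O(R^d)$ of them, contribute $\le \norm{\cent{B_x}}\norm{\cent{B_{x'}}}=O(R^d)$ each, which again gives $R^{2d}$.

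I therefore expect the decisive step to be reorganising by target position instead. Writing $D\,L=\norm{\sum_y \cent{G_y}}_\beta^2$ with $G_y=[W(t),v_y]$ localised near $y$ with $\norm{G_y}_\infty=O(R^d)$ is symmetric and does not help. The correct trick, which I would try first, is the bound $\norm{\cent{[W(t),V]}}_\beta^2\le 4\norm{\cent{W(t)}\cent V}_\beta^2$-style decomposition $[W(t),V]=[\cent{W}(t),\cent V]$. Clustering applied to the four-point functions $\langle \cent w_{x}(t)\cent v_y\,\cent v_{y'}\cent w_{x'}(t)\rangle$ then shows that a term is non-negligible only when all four points lie within one causal volume of each other up to a pairing. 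Disconnected pairings cancel, or reproduce $\abs{\langle[W,V]\rangle}^2$, which is removed by centring. The surviving connected contributions number $L\cdot O(R^d)$. Making this cluster-expansion count rigorous with the polynomial factors $P_{\rm cl}$ is where I expect the real work to lie.
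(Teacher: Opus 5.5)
Your bound on $\bar C^{(L)}$ is correct and is the same Lieb--Robinson shell sum the paper uses. The dOTOC part has a genuine gap, exactly at the step you flag as the main obstacle.

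\textbf{Where the argument stops.} Your rigorous ingredients only reach $D^{(L)}\lesssim R_{\rm LR}^{2d}$. They give a per-source bound $\norm{\cent{[w_x(t),V]}}_\beta^2=O(R_{\rm LR}^d)$, followed by Cauchy--Schwarz over $O(R_{\rm LR}^d)$ nearby sources. The hoped-for per-source bound $O(1)$ is false whenever the commutator saturates on a growing region. For example, after disorder averaging in the paper's models the cross terms vanish, and this quantity equals $\sum_y\E_J C_{xy}(t)\sim t$.

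\textbf{Why the cluster expansion cannot close the gap.} There are two ways to read it, and neither works.
\begin{itemize}
\item If you first decouple, $\norm{\cent{[W(t),V]}}_\beta^2\le 4\norm{W(t)V}_\beta^2$, for the full extensive sums, you discard the commutator. What remains contains the disconnected piece $\langle W^2\rangle_\beta\langle V^2\rangle_\beta=\Theta(L^2)$. That is not the term $\abs{\langle[W,V]\rangle_\beta}^2$ removed by centring, so you only recover $D\lesssim L$.
\item If you keep the commutator and expand into four-point functions, the problem is counting. Inside one causal volume, $w_x(t)$ is spread over radius $R_{\rm LR}$, so equal-time clustering gives no decay for mixed correlators such as $\langle w_x(t)v_yv_{y'}w_{x'}(t)\rangle_\beta$. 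There are $\Theta(LR_{\rm LR}^{3d})$ such quadruples, each bounded only by $O(1)$. No notion of connectedness reduces them to $LR_{\rm LR}^d$; the paper's toy example $a^{(R)}_{xy}$ shows that positivity plus pairwise locality cannot supply this.
\end{itemize}

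\textbf{The missing idea: localise first, and decouple only at causal scale.}
\begin{itemize}
\item Partition the sources into cubes $B$ of side $\Theta(R_{\rm LR})$. Replace $V$ by $V_{S_B}$ on an $O(R_{\rm LR})$ enlargement, at the cost of an exponentially small tail.
\item Apply the H\"older estimate
\[
\norm{\cent{[A,B]}}_\beta^2\le 4\sqrt{\langle A^4\rangle_\beta\langle B^4\rangle_\beta},
\]
which follows from $\norm{AB}_\beta\le\norm{\rho^{1/4}A}_{S_4}\norm{B\rho^{1/4}}_{S_4}$.
\item Stationarity gives $\langle W_B(t)^4\rangle_\beta=\langle W_B^4\rangle_\beta$, so only \emph{static} fourth moments remain. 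These are $O(|B|^2)$ and $O(|S_B|^2)$ by clustering of static densities: each quadruple either has an isolated point, whose factorised term vanishes by centring, or splits into two pairs.
\item This yields $\norm{\cent{[W_B(t),V]}}_\beta^2=O(R_{\rm LR}^{2d})$ per block, i.e.\ $O(R_{\rm LR}^d)$ per source. Each block has only $O(1)$ near neighbours, so Cauchy--Schwarz over near block pairs gives $O(LR_{\rm LR}^d)$.
\end{itemize}

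\textbf{Smaller points.}
\begin{itemize}
\item Your hypothesis controls $\langle a_Xb_Y\rangle_\beta-\langle a_X\rangle_\beta\langle b_Y\rangle_\beta$, not the symmetric product $\langle a,b\rangle_\beta$ that you use. The paper transfers clustering to the symmetric product by three-lines interpolation, using real-time Lieb--Robinson bounds on the strip edges.
\item The bound $\norm{w\,\cent{V_X}}_\beta\le\norm{w}_\infty\norm{\cent{V_X}}_\beta$ fails at $\beta>0$. H\"older gives $\norm{w}_\infty\langle V_X^4\rangle_\beta^{1/4}$ instead, which again needs the fourth moment.
\item Distant sources are not actually an obstacle. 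If the truncation radius grows with $\dist(x,x')$, as in the paper, their total contribution is $O(L)$.
\end{itemize}
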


The bound on the centred local commutator is immediate. In particular, the Lieb--Robinson bound gives
\begin{align}
\norm{\cent{[w_x(t),v_y]}}_\beta
&\le \norm{[w_x(t),v_y]}_\infty
\nonumber\\
&\le K e^{-\mu_{\rm LR}[\dist(x,y)-v_{\rm LR} |t| - R_0]_+}.
\label{eq:pair-LR-main}
\end{align}
For fixed $x$, summing its square over $y$ therefore gives the volume of one causal ball, which proves Eq.~\eqref{eq:Csum-bound-main}.

The full dOTOC requires a different organisation because it also contains cross terms between distinct local commutators. The first ingredient is the thermal $L_4$ estimate
\begin{equation}
\norm{\cent{[A,B]}}_\beta^2
\le4\sqrt{\langle A^4\rangle_\beta\langle B^4\rangle_\beta},
\label{eq:L4-main}
\end{equation}
for Hermitian $A$  and $B$. The second is a fourth-moment consequence of ordinary clustering. If $a_x$ are uniformly bounded, centred, fixed-support Hermitian observables and $A_S=\sum_{x\in S}a_x$, then
\begin{equation}
\langle A_S^4\rangle_\beta\le K_\beta |S|^2.
\label{eq:fourth-moment-main}
\end{equation}
Both statements are proved in Appendices~\ref{app:thermal} and \ref{app:clustering}.

To apply these estimates, we partition the lattice of source positions $x$ into $d$-dimensional cubes $B$ of side comparable to $R_{\rm LR}(t)$ and write
\begin{equation}
W_B =\sum_{x\in B} (w_x-\langle w_x\rangle_\beta\one),
\qquad
M_B(t)=\cent{[W_B(t),V]}.
\label{eq:block-def-main}
\end{equation}
For each such block, locality allows the target sum to be restricted, up to an exponentially small Lieb--Robinson tail, to the centred operator $V_{S_B}=\sum_{y\in S_B}(v_y-\langle v_y\rangle_\beta\one)$ on an $O(R_{\rm LR}(t))$ enlargement $S_B$ of $B$. For this local part, Eqs.~\eqref{eq:L4-main} and \eqref{eq:fourth-moment-main}, together with thermal stationarity, give
\begin{align}
\norm{M_B(t)}_\beta^2
&\lesssim
\sqrt{\langle W_B(t)^4\rangle_\beta
       \langle V_{S_B}^4\rangle_\beta}
\nonumber\\
&=
\sqrt{\langle W_B^4\rangle_\beta
       \langle V_{S_B}^4\rangle_\beta}
\lesssim |B|R_{\rm LR}(t)^d.
\label{eq:block-bound-main}
\end{align}
Since $|B|=\Theta(R_{\rm LR}(t)^d)$, the squared norm of a causal-size block is $O(R_{\rm LR}(t)^{2d})$. There are $O(L/R_{\rm LR}(t)^d)$ such blocks. Moreover, because both $B$ and $S_B$ have linear size $O(R_{\rm LR}(t))$, the enlarged support associated with any given block overlaps those of only $O(1)$ neighbouring blocks. The contribution of these nearby block pairs is therefore $O(LR_{\rm LR}(t)^d)$. Covariances between more distant blocks are summable and obey the same overall scaling. Dividing by $L$ then leads to the inequality given in Eq.~\eqref{eq:dOTOC-bound-main}.

The symmetric thermal clustering used in this last step follows from Eq.~\eqref{eq:Gibbs-clustering-main} and real-time locality by interpolation in the thermal strip. The proof is given in Appendix~\ref{app:clustering}. The finite-size regime $R_{\rm LR}(t)^d\gtrsim L$ is treated separately in Appendix~\ref{app:one-volume}, where the entire lattice can be used as one block.

The block counting can be summarised as follows. A block contains $O(R_{\rm LR}(t)^d)$ sources and its relevant target region contains $O(R_{\rm LR}(t)^d)$ sites. Equations~\eqref{eq:L4-main} and \eqref{eq:fourth-moment-main} give a squared block commutator of order $R_{\rm LR}(t)^{2d}$. There are $O(L/R_{\rm LR}(t)^d)$ such blocks, and only $O(1)$ neighbouring blocks have overlapping causal regions. The total is therefore $O(LR_{\rm LR}(t)^d)$. A sitewise estimate would instead lose one extra factor of $R_{\rm LR}(t)^d$. The difference comes from the fourth-moment bound for the full causal block. Appendix~\ref{app:one-volume} also gives a simple example showing that positivity and pairwise locality by themselves do not supply this missing constraint.

Uniform exponential clustering \eqref{eq:Gibbs-clustering-main} holds for one-dimensional finite-range translation-invariant spin systems at every finite temperature~\cite{Araki1969,Bluhm2022}, and at sufficiently high temperature   $\beta < \beta_{\rm c}$ in arbitrary dimension~\cite{Kliesch2014}. Theorem~\ref{thm:one-volume} therefore applies in those regimes and, more generally, wherever Eq.~\eqref{eq:Gibbs-clustering-main} holds.  Some restriction excluding non-clustering equilibrium states is necessary for   a volume-uniform statement.   For example, consider the ferromagnetic Ising model in $d\ge2$ below $T_c$, on the periodic lattice.  Its finite-volume Gibbs state is unique and symmetric, but its clustering constant is not uniform in $L$.  Taking $w_x=\sigma_x^1$, $v_x= \sigma_x^2$, $W =\sum_xw_x$ and $V =\sum_xv_x$, we find $[W,V]=2i\sum_x \sigma_x^3$ at $t=0$. Hence, $D^{(L)}(0)=4\operatorname{Var}_\beta(\sum_x\sigma_x^3)/L$, which is extensive when $\operatorname{Var}_\beta(\sum_x\sigma_x^3)=\Theta(L^2)$.  Here, the variance is defined as $\operatorname{Var}
_\beta(A) \equiv
\langle A^2\rangle_\beta - \langle A\rangle_\beta^2$.

We note that the power $d$ in the bound on $\bar C$ cannot in general be improved. If the local squared commutator approaches an order-one value behind a ballistic front, the number of sites  contributing to $\bar C$ grows as $t^d$, and hence $\bar C(t)\sim t^d $. The same behaviour can also occur for the full dOTOC. In the Brownian model below we find, after disorder averaging, $D(t)= 4 v_B(N) t + o(t)$ in one spatial dimension. We also find a corresponding late-time linear behaviour in the SYK  chain, subject to the all-orders eikonal assumption discussed in Section~\ref{sec:SYK-main}.

Taking the thermodynamic limit at fixed time whenever it exists (or taking the limsup otherwise)  in Eqs.~\eqref{eq:Csum-bound-main} and \eqref{eq:dOTOC-bound-main}, the rigorous temporal growth bounds are
\begin{equation}
\bar{C}(t)  \lesssim t^d,
\qquad
D (t) \lesssim t^d.
\label{eq:hierarchy}
\end{equation}
Importantly, real-time locality is sufficient for the first bound, while equilibrium clustering is additionally required for the second. All constants are uniform in spatial volume for a fixed microscopic model, and at fixed $N$.

\section{From Lyapunov growth to scrambling volume}\label{sec:lyapunov-volume}

We now ask how the one-volume regime of growth with $t^d$ is reached in a chaotic system.  We begin with $\bar C(t)=\sum_r C_r(t)$, for which the spatial interpretation is simpler and more direct. The dOTOC instead  contains the cross terms in Eq.~\eqref{eq:D-C-cross}, so its crossover depends on their dynamics as well. In the two examples of Section~\ref{sec:examples}, exact ensemble symmetries remove these terms after disorder averaging and give $\E_JD=\E_J\bar C$.

For the analysis, we note that the order in which the late-time and the large-$N$ limits are taken is important. We keep $N$ large and finite while following the evolution through $t_*(N)$. The regime $t\ll t_*$ is referred to as pre-scrambling, $t  \sim t_*$ is the onset of local saturation, and $t> t_*$ probes the subsequent growth of the saturated region. Taking $N= \infty$ first at fixed time sends $t_*$ to infinity.

A convenient way to describe the pre-scrambling spatial profile is in terms of the velocity-dependent Lyapunov exponent~\cite{Khemani2018,MezeiSarosi2020}. Here,  $\lambda(v)$ is the exponential growth rate measured along the ray $r=vt$. Writing $v=r/t$, suppose that before local saturation,
\begin{equation}
C_r (t)= \eps_N b(v,t)e^{t\lambda(v)},
\qquad
\lambda(v) \le\lambda(0) =\lambda_L,
\label{eq:vdle-main}
\end{equation}
where $b(v, t)$ is a positive subexponential prefactor, $\ln b(v,t)  = o(t)$  uniformly on the bounded velocity range considered, which remains nonzero near $v=0$, and the weight outside a bounded velocity range is exponentially subleading. Along any sequence with $t\to\infty$ and $N\to\infty$ simultaneously, such that $\eps_{N} e^{\lambda_L t} \to 0$  and the full local profile, including $b(v,t)$, tends uniformly to zero,  we have
\begin{equation}
\frac{1}{t} \ln \frac{\bar{C}(t)}{\eps_{N}}
\to \lambda_L.
\label{eq:rate-preserve}
\end{equation}
Thus, spatial integration preserves the Lyapunov exponent. It can, however, generate a polynomial prefactor. 

If the maximum of $\lambda(v)$ at $v=0$ is smooth and non-degenerate, we may write
\begin{equation}
\lambda(v)= \lambda_L-\frac{1}{2} v^{\mathsf T} \mathsf H v
+O(|v|^3),
\qquad
\mathsf H > 0,
\end{equation}
where $\mathsf H$ is the Hessian of $-\lambda$ at $v=0$. Here, we assume we are working in $d$ spatial dimensions so that $r$ and $v = r/t$ are vectors. If $b(v,t)$ approaches a nonzero constant $b_0$ in the region $|r|=O(\sqrt{t})$ that dominates the spatial sum, expanding $\lambda(v)$ around its maximum gives
\begin{equation}
\bar{C}(t)\sim
\eps_N e^{\lambda_Lt}
\, b_0\frac{(2\pi t)^{d/2}}{\sqrt{\det\mathsf H}}.
\label{eq:laplace-main}
\end{equation}
The spatial integration does not change the exponential growth rate. The additional factor $t^{d/2}$ arises because the dominant contribution comes from a region of width $|r|=O(\sqrt{t})$ in each spatial direction. A time-dependent prefactor in the local OTOC can modify this power. For example, a heat-kernel prefactor $t^{-d/2}$ cancels it exactly, as happens in both many-body examples studied below.

The onset of saturation is a separate regime and need not immediately resemble a rigid ballistic front. To see this, consider the following simple profile:
\begin{equation}
C_r(t)= C_*\min\!\left\{1,\eps_Ne^{t\lambda(r/t)}\right\},
\qquad
\eps_N=e^{-\lambda_Lt_*},
\label{eq:clipped}
\end{equation}
where $C_*>0$ is an order-one value setting the `height' of the plateau. Its saturation contour is determined by
\begin{equation}
t\,[\lambda_L-\lambda(r/t)]=\lambda_L(t-t_*).
\label{eq:sat-contour}
\end{equation}
For the quadratic behaviour of $\lambda(v)$ near $v=0$, consider times just after the onset of saturation, $0<t-t_*\ll t_*$. Expanding the left-hand side of Eq.~\eqref{eq:sat-contour} gives
\begin{equation}
\frac{1}{2t} r^{\mathsf T}\mathsf H r
+ O\left(\frac{|r|^3}{t^2}\right)
= \lambda_L (t-t_*).
\label{eq:quadratic-crossover}
\end{equation}
At $t=t_*$, the local OTOC at the origin first becomes order one, although the pre-scrambling profile is already spatially extended around it.  At slightly later times, increasingly distant points cross the same saturation threshold.  For the quadratic form of $\lambda(v)$ near $v= 0$, the saturation condition gives $|r|^2\sim t_*(t-t_*)$, up to direction-dependent constants set by $\mathsf H$. Thus, the characteristic size of the newly saturated region grows as $\sqrt{t_*(t-t_*)}$, and its volume as $[t_*(t-t_*)]^{d/2}$.  This regime describes the progressive saturation of a pre-scrambling profile that is already spatially extended.  At later times, the boundary of the order-one region may instead approach a ballistic front with a finite velocity.

Deep after scrambling, considerably less information is needed. Suppose there are fixed velocities $0 < v_- <v_+$ such that $C_r(t)$ is bounded from below by a positive constant for $|r|\leq v_-(t-t_{*})$, remains uniformly bounded for $|r|\leq v_+(t-t_{*})$, and the total weight outside the outer ball is $o((t-t_{*})^d)$. Then,
\begin{equation}
\bar{C}(t)=\Theta((\Delta t)^d),
\qquad \Delta t=t-t_*\to\infty.
\label{eq:late-volume}
\end{equation}
This is precisely the growth of one ballistic volume. It also shows how the geometric upper bound $\bar{C}(t)\lesssim t^d$ proved in Section~\ref{sec:one-volume} can be saturated. See Appendix~\ref{app:largeN} for elementary large-deviation and volume estimates underlying these statements.

\subsection{Spatial integration and the MSS-type bound}

There also exists a natural MSS-type  constraint on the pre-scrambling spatial profile. The key observation is that a positive normalised spatial average of functions mapping the MSS strip into the unit disk has the same property. Let $F_r(z)$ be the normalised correlator of Eq.~\eqref{eq:F-def}, analytically continued to the MSS strip
\begin{equation}
|\Im z| <\frac{\beta}{4},
\qquad |F_r(z)| \le  1.
\label{eq:MSS-strip-local}
\end{equation}
Here, we impose the exact full-strip bound in Eq.~\eqref{eq:MSS-strip-local} as an additional assumption, and it  does not in general follow from the normalisation in Eq.~\eqref{eq:F-def}. The usual MSS argument uses a half-strip and includes normalisation errors. Their spatially weighted contribution must be controlled before taking the limits below (see Ref.~\cite{Maldacena:2015waa}).  For positive weights $a_r$, define
\begin{equation}
F_a(z)=
\frac{\sum_r a_rF_r(z)}{\sum_r a_r}.
\label{eq:positive-smearing}
\end{equation}
The unit disk is convex, so the positive normalised average $F_a(z)$ is analytic in the same strip and still satisfies $|F_a(z)|\leq1$. Mapping the strip to the unit disk and applying the Schwarz--Pick theorem therefore gives, on the real axis,
\begin{equation}
\frac{|\partial_tF_a(t)|}{1-|F_a(t)|^2}
\le\frac{\pi}{\beta}.
\label{eq:spatial-SP-main}
\end{equation}
Consequently,
\begin{equation}
\left|  \partial_t \sum_r a_r[1-F_r(t)]  \right|
\le\frac{2\pi}{\beta}
\left|  \sum_r a_r[1-F_r(t)] \right|.
\label{eq:spatial-MSS-main}
\end{equation}
The result is independent of the number of spatial points $r$ that are included. Using the integrated form of this inequality in finite volume and then taking the thermodynamic limit avoids any assumption about convergence of time derivatives.  For Hermitian probes, $F_r(t)$ is real on the real axis and the assumed bound (normalisation) $|F_r(t)|\le 1$ makes $1-F_r(t)$ non-negative.

For exponential weights $a_r = e^{\mu \cdot r}$, with real $\mu$ in the domain where the spatial transform converges, the same argument constrains spatial propagation as well as local growth. The reason is that the exponential weight tilts the spatial profile.  Let $\lambda_F(v)$ denote the velocity-dependent exponent of the MSS four-point deficit $1-F_r(t)$.  Note that this is not the same as the exponent $\lambda(v)$ of the squared commutator in Eq.~\eqref{eq:vdle-main}.  If the normalised large-$N$ four-point deficit has the  form
\begin{equation}
1 - F_r(t)
=
\eps_N \exp\!\left[ t \lambda_F(r/t) + o(t) \right],
\label{eq:linearised-LD-main}
\end{equation}
and the tails remain exponentially negligible after each admissible tilt,  then along a ray $r = v  t$ the weight contributes an additional factor $e^{t \mu \cdot v}$. The weighted spatial sum is therefore governed at late times by the largest tilted exponent,
\begin{equation}
\sup_v \left[ \lambda_F(v) + \mu \cdot v \right].
\end{equation}
This is the Legendre  transform of $-\lambda_F(v)$, evaluated at $\mu$. Equivalently, when $\lambda_F(v)$ is smooth and concave, the weight $\mu$ selects the ray velocity $v_\mu$ satisfying
\begin{equation}
\nabla_v \lambda_F(v_\mu) = -\mu.
\end{equation}
Applying the MSS bound to the exponentially weighted sum then gives
\begin{equation}
\sup_v \left[ \lambda_F(v) + \mu \cdot v \right]
\le \frac{2\pi}{\beta},
\label{eq:Legendre-MSS-main}
\end{equation}
for every admissible $\mu$. Thus, each exponential weight gives a linear bound of the following form:
\begin{equation}
\lambda_F(v)
\le \frac{2\pi}{\beta} - \mu \cdot v.
\end{equation}
Optimising over the allowed values of $\mu$ gives the strongest bound at each velocity. Geometrically, these inequalities form a family of supporting planes whose envelope bounds the full velocity-dependent growth exponent $\lambda_F(v)$.  We give a more detailed derivation in Appendix~\ref{app:spatial-MSS}.

For each admissible tilt $\mu$, the bound
$\lambda_F(v) \le 2\pi /\beta-\mu \cdot v$ defines a supporting `hyperplane'
in $(v,\lambda)$ space, passing through
$(v,\lambda)=(0, 2\pi/\beta)$.  If the admissible tilts contain a ball $|\mu|<\mu_*$, their envelope implies $\lambda_F(v)\le 2 \pi/\beta-\mu_*|v|$. Here, the normalised large-$N$ deficit must have a nonzero domain of convergence. We note that this does not follow from finite-$N$ locality and clustering alone. In general, the strongest envelope may be anisotropic. This, however, has an interesting consequence for four-point functions with maximal growth, i.e., when $\lambda_F(0) = 2\pi / \beta$.  In such systems, $\lambda_F(v)$ cannot have a smooth, non-degenerate maximum at $v =0$.  If it did, then the same expansion that leads to Eq.~\eqref{eq:laplace-main} would give the tilted growth rate $\lambda_F(0)  + \frac{1}{2} \mu^{T} \mathsf H_F^{-1} \mu+O(|\mu|^3)$, where $\mathsf H_F$ is the Hessian of $-\lambda_F$ at $v=0$.  For small admissible $\mu\neq 0$, this exceeds $2\pi/\beta$ and contradicts Eq.~\eqref{eq:Legendre-MSS-main}.   Hence, under these assumptions and conditions, the exponent must fall at least linearly from its maximum, $\lambda_F (v) \le \lambda_F(0) - \mu_*|v|$, in agreement with the velocity-dependent form of Ref.~\cite{MezeiSarosi2020}.  Conversely, a smooth maximum of $\lambda_F(v)$ at $v=0$ is only possible when $\lambda_F(0) < 2\pi/\beta$. The pole-free mode of the SYK chain in Section~\ref{sec:examples} has exponent $2\pi\nu/\beta$ with $\nu < 1$.

\section{Solvable examples and a numerical crossover}\label{sec:examples}

We now study two models in which the cross terms in $D(t)$ do not have to be bounded. In both, an exact symmetry of the disorder ensemble removes every off-diagonal pair term after averaging, so that the relation between $D$ and $\bar C$ is known exactly. The Brownian model then gives the full crossover in the deterministic approximation. The SYK chain gives a finite-temperature static example with a controlled early Lyapunov regime and, assuming the standard late-time eikonal form of the OTOC, an explicit post-scrambling result.

\subsection{Brownian model: the `deterministic' crossover}

The Brownian coupled-cluster model introduced by Xu and Swingle~\cite{XuSwingle2019} consists of a one-dimensional chain of clusters, each containing $N$ spin-$1/2$ degrees of freedom. The spins interact through random two-spin couplings, both within each cluster and between neighbouring clusters. More explicitly, the time-dependent Hamiltonian has the form
\begin{equation}
H(t) = \sum_x H_x(t)  + \sum_x H_{x,x+1}(t),
\end{equation}
with
\begin{align}
H_x(t) &=
\sum_{a<b} \sum_{\alpha,\beta=x,y,z} J^{\alpha\beta}_{x;ab}(t) \,  \sigma^\alpha_{x,a}\sigma^\beta_{x,b},
\nonumber\\
H_{x,x+1}(t)
&= g \sum_{a,b}
\sum_{\alpha,\beta=x,y,z}
\widetilde J^{\alpha\beta}_{x;ab}(t)\,
\sigma^\alpha_{x,a}\sigma^\beta_{x+1,b}.
\end{align}
Here, $a,b=1,\ldots,N$ label spins within a cluster. We take $N\ge 2$ and $g>0$. For the unordered on-site pairs $a<b$ used above, the independent centred Gaussian white noises are normalised by
\begin{align}
\E_J[J_I(t)J_{I'}(t')] &=\frac{\delta_{II'}\delta(t-t')}{8(N-1)},\nonumber\\
\E_J[\widetilde J_I(t)\widetilde J_{I'}(t')]&=\frac{\delta_{II'}\delta(t-t')}{16N},
\end{align}
with zero mixed covariance. Here, $I$ denotes the full set of site, spin and Pauli indices. The factor $g$ remains explicit in $H_{x,x+1}$. The stochastic evolution is defined as the limit of independent unitary increments $U(t+\mathrm dt) = e^{-i\,\mathrm d K_t}U(t)$, where $\mathrm d K_t$ is obtained from the displayed Hamiltonian by replacing $J_I(t)\mathrm dt$ and $\widetilde J_I(t)\mathrm dt$ with Brownian increments of the corresponding variances. This convention fixes the time units used below~\cite{XuSwingle2019}.  Thus, unlike an ordinary disordered Hamiltonian, the couplings themselves are independently renewed in time.

Averaging over this Brownian dynamics gives a closed evolution equation for the operator-weight distribution. At large $N$, neglecting operator-weight fluctuations gives a deterministic nonlinear equation describing exponential growth, propagation of a ballistic front and eventual local saturation. Because the couplings are white noise in time, this model lies outside the static  interaction assumption of Theorem~\ref{thm:one-volume}. In this case, we use it to provide a solvable illustration of the one-volume growth.

At infinite temperature ($\beta = 0$), we take $w_x= \sigma^z_{x,1}$ and $v_x= \sigma^z_{x,2}$. A local Pauli conjugation flips any chosen source or target pair commutator while leaving the Brownian measure invariant. Every cross term between distinct pairs is therefore odd under an exact ensemble symmetry, so
\begin{equation}
\E_J D^{(L)}(t)  = \E_J \bar{C}^{(L)} (t)
\label{eq:BCC-equality}
\end{equation}
for every $L$, $N$,  and $t$. This is an equality of disorder averages (a short proof is presented in Appendix~\ref{app:BCC}).

We now use the operator-growth results of Ref.~\cite{XuSwingle2019}. Their closed cluster-weight description applies after a short $O(1)$ microscopic transient, well before the logarithmic scrambling time. In this regime, the linearised dynamics determines the spatial sum before saturation, while at late times the saturated travelling front determines its growth. Since Eq.~\eqref{eq:BCC-equality} holds exactly after Brownian averaging, the same asymptotics apply to the full dOTOC:
\begin{align}
\E_J \bar{C}^{(L)}(t) = \E_J D^{(L)}(t) &=   \frac{8}{3N} e^{\gamma t} [1 + o(1)], \nonumber\\
&\qquad\quad 1 \ll t, \quad N^{-1} e^{\gamma t} \to 0, \label{eq:BCC-early-main}\\
\E_J \bar{C}^{(L)}(t) = \E_J D^{(L)}(t) &= 4 v_B(N) t + o(t), \nonumber\\
&\qquad\quad t/t_* \to \infty, \label{eq:BCC-crossover}
\end{align}
where
\begin{align}
\gamma &= 3(1 + g^2), \qquad
t_* \sim
\frac{\ln N}{\gamma}, \nonumber\\
v_B &= \min_{\kappa > 0}
\frac{3[1 + g^2 \cosh \kappa]}{\kappa}.
\label{eq:BCC-parameters}
\end{align}
where $\kappa>0$ is the spatial decay exponent of the
linearised leading edge. Moreover, $v_B$ is the deterministic velocity. At large finite $N$, $v_B(N)=v_B+O(1/\ln^2N)$~\cite{XuSwingle2019}. The late-time limit is taken with $L\to\infty$ and $v_B(N)t\ll L$, before wraparound.  The early-time expression in Eq.~\eqref{eq:BCC-early-main}  comes from the dilute pre-scrambling operator-weight regime.  The late-time expression in Eq.~\eqref{eq:BCC-crossover}  counts the expanding saturated interval. To reach the late-time regime, one needs to keep $N$ large and finite, follow the dynamics past $t_*\sim(\ln N)/\gamma$, and then take the corresponding large-$N$ scaling limit.  Thus, both $\bar C$ and $D$ can be found explicitly in the two asymptotic regimes, with equal disorder averages.

The crossover can be visualised without introducing an ad hoc interpolation. The deterministic large-$N$ lattice equation obtained from the Brownian operator-weight master equation is~\cite{XuSwingle2019}
\begin{equation}
\dot\phi_r(t)
=\bigl[3-4\phi_r(t)\bigr]
\left[\phi_r(t)+\frac{g^2}{2}
\bigl(\phi_{r-1}(t)+\phi_{r+1}(t)\bigr)\right],
\label{eq:BCC-mean-field}
\end{equation}
with the localised initial condition $\phi_r (0) = \delta_{r0}/N$.  Here, $\phi_r$ is the deterministic approximation to the mean operator weight per spin in cluster $r$. After the microscopic internal-relaxation transient, the corresponding approximation to the disorder-averaged spatially integrated OTOC is
\begin{equation}
\E_J \bar{C}(t) \simeq
\frac{8}{3} \sum_r \phi_r(t).
\label{eq:BCC-Cmf}
\end{equation}
Together with Eq.~\eqref{eq:BCC-equality}, this also approximates $\E_J D(t)$. For the $1/N$-dependent initial condition (the `seed'), we have not established independently that this approximation remains valid throughout the crossover.  Equation~\eqref{eq:BCC-mean-field} is the noiseless $N=\infty$ drift equation, and, note that, finite-$N$ front fluctuations are not represented by it and shift the selected front velocity by the Brunet--Derrida-type $O(1/\ln^2N)$ correction discussed in Ref.~\cite{XuSwingle2019}. At earlier times, the fixed-spin commutator contains additional short-time structure, so we do not use the large-$N$ operator-weight equation in that regime. Linearising Eq.~\eqref{eq:BCC-mean-field} around $\phi_r = 0$ gives the growth rate $\gamma$ and selects the lattice front velocity $v_B$ in Eq.~\eqref{eq:BCC-parameters}, while the nonlinear term drives the local profile to the saturated value $\phi_r = 3/4$ behind the front. Figure~\ref{fig:BCC-crossover} shows the resulting deterministic approximation to the behaviour of $\mathbb E_J D(t)$ and $\mathbb E_J \bar{C} (t)$ in the pre- and post-scrambling regimes.

\begin{figure*}[t]
\centering
\begin{minipage}[t]{0.485\textwidth}
\centering
\includegraphics[width=\linewidth]{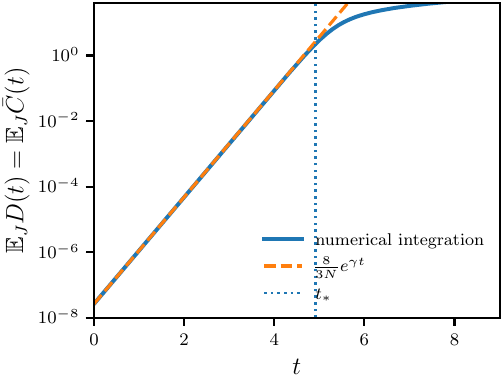}
\end{minipage}\hfill
\begin{minipage}[t]{0.485\textwidth}
\centering
\includegraphics[width=\linewidth]{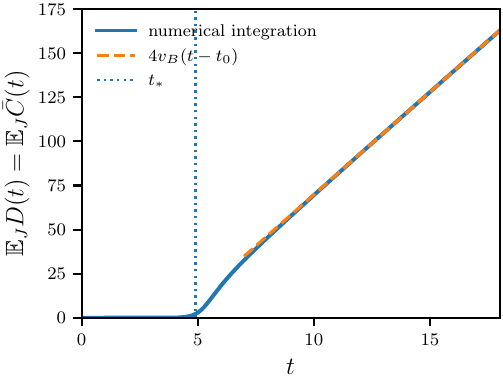}
\end{minipage}
\caption{Numerical Brownian crossover from Lyapunov growth to scrambling-volume growth. We integrate Eq.~\eqref{eq:BCC-mean-field} on $L=801$ sites with $g=1/2$, $N=10^8$, and $\phi_r(0) = \delta_{r0}/N$.  The plotted quantity is $(8/3)\sum_r\phi_r(t)$.  The axis labels denote the deterministic approximations to $\mathbb E_JD$ and $\mathbb E_J\bar C$ in Eqs.~\eqref{eq:BCC-Cmf} and \eqref{eq:BCC-equality}. Finite-$N$ front fluctuations are not included in Eq.~\eqref{eq:BCC-mean-field}. {\bf Left:} On a logarithmic scale, the numerical result follows $8e^{\gamma t} / (3N)$ through the pre-saturation regime; the leading logarithmic scrambling estimate is $t_* = (\ln N) / \gamma=4.912$. {\bf Right:} After saturation, the same curve approaches one-volume growth.  The dashed line has the analytic large-$N$ slope $4v_B =11.631$ and an effective front delay $t_0 =4.000$ obtained from a one-parameter fit with the slope fixed. An unconstrained fit over $12\le t\le 18$ gives slope $11.613$, which differs from the analytic large-$N$ slope $4v_B$ by a relative amount of $1.5 \times 10^{-3}$.}
\label{fig:BCC-crossover}
\end{figure*}

\subsection{SYK chain: a thermal Lyapunov regime and a strong-coupling late-time prediction}
\label{sec:SYK-main}

Next, we consider the finite-temperature SYK chain introduced by Gu, Qi and Stanford~\cite{Gu:2016oyy}, using the large-$q$ finite-coupling analysis of Choi, Haehl, Mezei and S\'{a}rosi~\cite{CHMS2023}. Unlike in the Brownian model considered above, the Hamiltonian is now static and the thermal state is taken at inverse temperature $\beta$. We consider a chain of $L$ sites, each containing $N$ Majorana fermions $\chi_{x, a}$, where $x$ labels the site and $a=1,\ldots,N$ the flavour, with
\begin{equation}
\left\{ \chi_{x,a}, \chi_{y,b} \right\}
=
\delta_{xy} \delta_{ab}.
\end{equation}
For even $q$, the Hamiltonian contains random $q$-body interactions within each site and random nearest-neighbour interactions involving $q/2$ Majoranas from each of the two sites,
\begin{equation}
H
=
\sum_x
\left(
H_x^{(0)} + H_{x,x+1}^{(1)}
\right),
\end{equation}
with
\begin{align}
H_x^{(0)} &= i^{q/2} \sum_{a_1<\cdots<a_q}
J^{(0)}_{x;a_1\ldots a_q}\,
\chi_{x,a_1}\cdots\chi_{x,a_q},
\nonumber\\
H_{x,x+1}^{(1)} &= i^{q/2} \sum_{\substack{
a_1<\cdots<a_{q/2}\\
b_1<\cdots<b_{q/2}
}}
J^{(1)}_{x;a_1\ldots a_{q/2};b_1\ldots b_{q/2}}\,  \nonumber \\
&\times \chi_{x,a_1}  \cdots\chi_{x,a_{q/2}}
\chi_{x+1,b_1}  \cdots\chi_{x+1,b_{q/2}} .
\end{align}
The couplings $J^{(0)}$ and $J^{(1)}$ are independent centred Gaussian random variables with second moments
\begin{align}
\E_J\!\left[
\left(J^{(0)}_{x;a_1\ldots a_q}\right)^2
\right]
&=
\frac{(q-1)!}{N^{q-1}} J_0^2,
\nonumber\\
\E_J\! \left[
\left(J^{(1)}_{x; a_1\ldots a_{q/2};b_1 \ldots b_{q/2}}\right)^2
\right] &=
\frac{[(q/2)!]^2}{qN^{q-1}} J_1^2 .
\end{align}
Here, $J_0$ and $J_1$ set the strengths of the on-site and nearest-neighbour disorder, respectively. For each finite chain and fixed disorder realisation, the Hamiltonian is  finite range and the local Hilbert space is finite. We do not, however,
invoke Theorem~\ref{thm:one-volume} directly for the thermodynamic
Gaussian-disorder ensemble, since its assumptions include volume-uniform
bounds on the local interaction strengths. The asymptotic statements below concern, instead, the disorder-averaged observable in the regime $N\gg q^2\gg1$.

In the large-$q$ limit, one keeps fixed the rescaled couplings
\begin{equation}
\mathcal J_0^2
= \frac{qJ_0^2}{2^{q-1}},
\qquad \mathcal J_1^2  = \frac{qJ_1^2}{2^{q-1}} .
\end{equation}
Following Ref.~\cite{CHMS2023}, we then define
\begin{align}
\mathcal J^2
&= \mathcal J_0^2 + \mathcal J_1^2, \nonumber\\
\eta &= \frac{\mathcal J_1^2}{\mathcal J^2},
\qquad
0 < \eta \le 1,
\nonumber\\
\frac{\pi \nu}{\cos(\pi \nu/2)}
&= \beta \mathcal J,
\qquad   0 < \nu < 1.
\label{eq:SYK-parameters-main}
\end{align}
In terms of these parameters, $\eta$ measures the relative strength of the intersite coupling, while $\nu$ parametrises the temperature and coupling. Strong coupling, $\beta \mathcal J \gg 1$, corresponds to $\nu \to 1$. It will also be convenient to measure real time in units of temperature by defining
\begin{equation}
\tau
=
\frac{2\pi t}{\beta}.
\label{eq:SYK-tau-main}
\end{equation}

For the local probes, let $A$ and $B$ be two disjoint sets of Majorana flavours with
\begin{equation}
|A|
=
|B|
=
s \in 2\mathbb Z,
\qquad
N > 2s,
\qquad
\frac{s}{q}
\to
\Delta > 0,
\end{equation}
in the large-$q$ limit. For any flavour set $S$, define the normalised Hermitian Majorana string
\begin{equation}
o_{S,x}
=
2^{|S|/2}
i^{|S|(|S|-1)/2}
\prod_{a \in S} \chi_{x,a},
\qquad
o_{S,x}^2
=
1.
\end{equation}
For the operators in the OTOCs, we choose
\begin{equation}
w_x  = o_{A,x}, \qquad v_x =  o_{B,x}.
\end{equation}
Since $s$ is even, both probes have even fermion parity. The parameter $\Delta=s/q$ is kept fixed as $q\to\infty$.

For a chain of length $L$, let $D^{(L)}(t)$ and $\bar C^{(L)}(t)$ denote the finite-volume dOTOC and spatially integrated squared commutator defined above. We write $\mathbb E_J$ for the Gaussian disorder average over the static coupling tensors $J$. The thermal expectation value is always evaluated first for the fixed Hamiltonian $H_J$, and only afterwards is the disorder average taken.

The isotropic Gaussian disorder ensemble is invariant under local $SO(N)$ rotations of the Majorana flavours. Since $A$ and $B$ are disjoint and $N>2s$, such a rotation can flip the sign of a single source or target string, while leaving all other strings unchanged.  Every off-diagonal pair term is therefore odd under an exact symmetry of the ensemble, and
\begin{equation}
\mathbb E_J D^{(L)}(t)
=
\mathbb E_J \bar C^{(L)} (t)
\label{eq:SYK-exact-equality}
\end{equation}
exactly at finite $N$, $q$, $L$ and $\beta$, provided $N>2s$. The identity in Eq.~\eqref{eq:SYK-exact-equality},  analogous to that of the Brownian model above, does not rely on the large-$N$ or large-$q$ limit.  Its proof is given in Appendix~\ref{app:SYK}.

We first consider the pre-scrambling regime. The asymptotic formulae of Ref.~\cite{CHMS2023} require
\begin{equation}
N \gg q^2 \gg 1.
\end{equation}
The dynamical asymptotic formulas and logarithmic time scales below are understood within the leading-large-$q$ effective theory, at fixed $\Delta,\nu,\eta$. This hierarchy alone does not make them uniform finite-$q$ asymptotics at growing times. If the exact dimensionless exponent is $\kappa_q(0)=\nu+\delta\kappa_q$, relative $o(1)$ accuracy additionally requires $\tau\delta\kappa_q\to0$, together with control of the other omitted terms. For example, an $O(q^{-1})$ exponent correction requires $\log(N/q^2)/q\to0$ near the logarithmic crossover scales. A useful feature of the present observable is that the maximal-chaos contribution to the ordinary one-string four-point function cancels at linear order. To see why, consider a momentum mode whose real-time dependence is $e^{\kappa(p)\tau}$, where $\kappa(p)$ is the dimensionless growth exponent given explicitly in Eq.~\eqref{eq:app-SYK-kappa}. The two OTOCs entering the symmetrically regularised commutator lie on the two edges of the MSS strip, at $t \pm i\beta/4$, and therefore contribute the factor of $2 \cos( \pi \kappa / 2 )$.  The one-string four-point function considered in Ref.~\cite{CHMS2023} contains precisely the corresponding factor $1/\cos(\pi \kappa/2)$.  Thus, the pole at $\kappa=1$, associated with maximal growth $e^{2\pi t/\beta}$, is cancelled in the  symmetric commutator~\cite{BlakeLiu2021}.

The leading regular mode that remains has dimensionless exponent $\kappa=\nu$ at zero spatial momentum. Its physical Lyapunov exponent is therefore
\begin{equation}
\lambda_L
=
\frac{2\pi \nu}{\beta}.
\label{eq:SYK-lambda-main}
\end{equation}
While higher  eikonal orders are still subleading, this pole-free mode gives
\begin{align}
\mathbb E_J \bar C^{(L)}(t)
= \mathbb E_J D^{(L)}(t)
&=
\mathcal A_{\Delta,\nu,\eta} \frac{q^2}{N} e^{\lambda_L t} [1 + o(1)].
\label{eq:SYK-early}
\end{align}
Here, $\mathcal A_{\Delta,\nu,\eta} \ge 0$ is a finite external-vertex factor determined by the choice of probes and by $\Delta$, $\nu$ and $\eta$. For probes with nonzero overlap with the pole-free mode, $\mathcal A_{\Delta,\nu,\eta} >0$. Its detailed form is, however, not needed to determine either $\lambda_L$ or the prefactor $q^2/N$. The corresponding local profile is Gaussian in space and is given explicitly in Appendix~\ref{app:SYK}.

The symmetric commutator receives a second-order eikonal contribution.  At fixed $1/2<\nu<1$ and $\eta$, its spatial sum grows as $(q^2/N)^2e^{2\nu\tau}/\sqrt{\tau}$, as derived in Appendix~\ref{app:SYK}. Equating it to Eq.~\eqref{eq:SYK-early} gives the formal crossover scale
\begin{equation}
\nu\tau_\times=\ln\frac{N}{q^2}+\frac12\ln\tau_\times+O(1).
\label{eq:SYK-tcross-main}
\end{equation}
For a parametrically separated perturbative second-order regime, one also needs strong coupling. Large $N$ at fixed coupling is, in itself, not sufficient. This does not, however, conflict with the MSS bound, which constrains $F_r$ rather than $C_r$.

It is useful to distinguish the following logarithmic scales. The first is $\tau_\times$ in Eq.~\eqref{eq:SYK-tcross-main}. A second scale is obtained by formally extrapolating the local pole-free Gaussian profile until its value at the origin becomes of order one. We denote this time by $t_{\rm pf}$ and define the corresponding dimensionless time
\begin{equation}
\tau_{\rm pf}\equiv \frac{2\pi t_{\rm pf}}{\beta}.
\end{equation}
The Gaussian prefactor gives
\begin{equation}
\nu \tau_{\rm pf}
= \ln\frac{N}{q^2}
+ \frac{1}{2}\ln \tau_{\rm pf} + O(1).
\label{eq:SYK-tpf-main}
\end{equation}
The term $\frac{1}{2}\ln \tau_{\rm pf}$ comes from the $1/\sqrt{\tau}$ width factor of the local Gaussian saddle.

The far-distance nonlinear energy-momentum tensor front has an extrapolated delay $t_B$, defined by
\begin{equation}
\tau_B
\equiv
\frac{2\pi t_B}{\beta}
=
\ln \frac{N}{q^2}
+
O(1).
\label{eq:SYK-tB-main}
\end{equation}
At fixed $0<\nu<1$ and large $N$,
\begin{equation}
\tau_{\rm pf}
=
\frac{1}{\nu}
\ln \frac{N}{q^2}
+
O(\ln\ln N),
\qquad
\tau_B
=
\ln \frac{N}{q^2}
+
O(1).
\end{equation}
At late times, the front moves away from the origin at an approximately constant speed. If we extend this late-time motion backwards, the front would reach the origin at $\tau=\tau_B$. It is in this sense that $\tau_B$ describes a delay in the propagation of the front.

The large-$q$ thermal two-point function fixes the symmetric thermal norms of the parity-even strings. To infer a late-time plateau, we additionally assume local factorisation of the time-ordered terms, decay of the strip-edge OTOCs, and a subleading connected one-point subtraction. Under these pointwise dynamical assumptions, the two strings chosen above have the conditional plateau
\begin{equation}
C_\infty
=
2
\left[
\cos\!\left( \frac{\pi \nu}{2} \right)
\right]^{4\Delta}
[1 + o(1)].
\label{eq:SYK-plateau-main}
\end{equation}
$C_\infty$ denotes the leading large-$N$, large-$q$ value approached by the local squared commutator deep inside a scrambled region.  The connected one-point subtraction is assumed subleading in this limit for the two disjoint strings. Equation~\eqref{eq:SYK-plateau-main} is therefore a large-$N$, large-$q$ statement and should be distinguished from the exact finite-$N$ identity in Eq.~\eqref{eq:SYK-exact-equality}.  At strong coupling, the plateau is parametrically small, $C_\infty\simeq 2( \pi/\beta\mathcal J)^{4\Delta}$.  For $\Delta >  0$, the norm $\|o_{S}\|^2_\beta$ of a Majorana string vanishes as $\beta \mathcal J\to\infty$.  Hence, the natural unit for the saturated commutator is $\|w\|_\beta^2 \|v\|_\beta^2$, rather than one. In these units, the conditional leading plateau is $2$, as in the Brownian model.

The cancellation of the maximal-chaos pole described above is only a statement about the term linear in that mode. In the coherent energy-momentum tensor sector of the leading eikonal approximation,  the two edges of the MSS contour acquire opposite phases $e^{\pm i\delta(r,t)}$, where $\delta(r,t)$ is the nonlinear eikonal phase. Their contribution to the symmetric commutator is therefore schematically of the form $1 - \cos \delta(r,t)$. For $|\delta|\ll1$, this begins at order $\delta^2$, which explains the absence of the linear maximal-chaos contribution. The phase-only form $1-\cos\delta$ describes the onset of the nonlinear regime, but not its end point.  Taken literally, it oscillates and never saturates.  We assume that the resummed strip-edge correlators decay in modulus, $|F_\pm|\to0$, as motivated by the zero-dimensional results~\cite{Kitaev:2017awl,MaldacenaStanford2016}. This assumed decay, together with factorisation of the two time-ordered terms and a subleading connected subtraction, produces the conditional plateau in Eq.~\eqref{eq:SYK-plateau-main}. The role of the eikonal phase is therefore twofold.  It explains the absence of a linear maximal-chaos term, and, once $|\delta|=O(1)$, it locates the front behind which the commutator becomes order one.

Near strong coupling, this nonlinear sector gives the butterfly front found in Refs.~\cite{Gu:2018jsv,CHMS2023,ChoiMezeiSarosi2021}. We denote by $u_B$ the front velocity measured in lattice sites per unit dimensionless time $\tau$.  It is
\begin{equation}
u_B^{-1}
=
\operatorname{arccosh}\!\left[
\frac{
1 + \nu + (\eta - 2)\nu^2
}{
\eta \nu^2
}
\right].
\label{eq:SYK-uB-main}
\end{equation}
Since $\tau=2\pi t/\beta$, the corresponding velocity in physical time is
\begin{equation}
v_B
=
\frac{2\pi}{\beta} u_B.
\label{eq:SYK-vB-main}
\end{equation}

To obtain an explicit late-time coefficient, one additional assumption is required. We assume that the full strong-coupling eikonal resummation produces a region saturated to the pair value in Eq.~\eqref{eq:SYK-plateau-main} throughout
\begin{equation}
|r|
<
u_B(\tau-\tau_B),
\end{equation}
up to a transition region whose width is $o(\tau)$, and that the total contribution from outside the front is also $o(\tau)$. We further take $L\to\infty$ with $v_Bt\ll L$, before finite-size wraparound of the front around the chain.  Under these assumptions, the length of the saturated interval is
\begin{equation}
2 u_B(\tau-\tau_B)
=
2 v_B(t-t_B),
\end{equation}
and Eq.~\eqref{eq:SYK-exact-equality} gives
\begin{align}
\mathbb E_J D^{(L)}(t) &=
\mathbb E_J \bar C^{(L)}(t) \nonumber\\
&=
4
\left[
\cos\!\left( \frac{\pi \nu}{2} \right)
\right]^{4\Delta}
v_B (t-t_B)  + o(t).
\label{eq:SYK-late-main}
\end{align}
Thus, the late-time linear growth is simply the saturated pair value multiplied by the length of the expanding ballistic interval.

Finally, the strong-coupling limit must be taken in the correct order. Defining $\epsilon_\nu   = 1-\nu$, the strong coupling corresponds to $0<\epsilon_\nu\ll1$.  We first keep $\epsilon_\nu$ fixed while taking the large-$N$, late-time eikonal limit, and only afterwards expand in $\epsilon_\nu$.  Setting $\nu=1$ from the outset is singular. In this limit, the pole and saddle approximations are nonuniform, while the fixed-$\Delta$ plateau in Eq.~\eqref{eq:SYK-plateau-main} vanishes. Equation~\eqref{eq:SYK-late-main} should therefore be understood as a leading strong-coupling eikonal prediction. Establishing the coefficient to all orders would require uniform control of the multi-string expansion and of the front and exterior regions, as discussed in Appendix~\ref{app:SYK}.

\section{Discussion}\label{sec:discussion}

We have considered three closely related scrambling observables.  The local squared commutator $C_r(t)$ measures the signal at a fixed separation. Its spatial sum $\bar C(t) = \sum_r C_r(t)$ measures the spatial extent of this signal and becomes proportional to the scrambled volume, once an order-one plateau has formed and the exterior weight is controlled.  The dOTOC $D(t)$ contains the same pair contributions together with their zero-momentum cross terms.

The main new results of the paper are the two bounds
\begin{equation}
\bar C(t) \lesssim t^d,
\qquad
D(t) \lesssim t^d.
\label{eq:discussion-bounds}
\end{equation}
The first follows from Lieb--Robinson locality. The second additionally uses equilibrium clustering. The causal-block proof shows where the improvement over the earlier $t^{3d}$ estimate comes from: the thermal fourth moment of a whole source block remains of order the square of its volume. This removes the extra causal-volume factor that appears in a sitewise estimate.

We have also followed the same observables through the scrambling time. A velocity-dependent Lyapunov form for $C_r(t)$ retains the same maximal exponential growth rate after spatial integration. Once the local profile has saturated, a ballistic order-one region with the exterior control of Eq.~\eqref{eq:app-exterior} gives $\bar C(t) = \Theta((t-t_*)^d)$. Positive spatial averages of the standard MSS four-point function obey the same Schwarz--Pick bound and lead to the spatial Legendre constraint in Eq.~\eqref{eq:Legendre-MSS-main}.

The Brownian coupled-cluster model gives the cleanest explicit deterministic crossover. The ensemble symmetry gives $\E_J D = \E_J \bar C$ exactly, and the deterministic approximation crosses over from $8 e^{\gamma t}/(3N)$ to $4v_Bt$ while the finite-$N$ late-time slope is $4v_B(N)$. In the finite-temperature SYK chain, the same exact disorder identity holds for the parity-even probes used here. Its early exponent is $2\pi \nu/\beta$, while the symmetric commutator cancels the linear maximal-chaos pole. The late-time coefficient in Eq.~\eqref{eq:SYK-late-main} relies on the additional all-orders strong-coupling eikonal assumption stated in Appendix~\ref{app:SYK}.

These results give a concrete relation between the usual pre-scrambling Lyapunov regime and the power-law growth discussed in Ref.~\cite{Kukuljan:2017xag}. Once the local amplitude has saturated, further growth can come from the increasing spatial volume over which the commutator is order one.  The detailed crossover, including its coefficient and front structure, remains model dependent. We stress that, in both examples, the cross terms in Eq.~\eqref{eq:D-C-cross} vanish only after disorder averaging.  Thus, $\E_JD=\E_J\bar C$ holds by symmetry, not dynamics, and the examples do not test the mechanism that distinguishes $D$ from $\bar C$ in Theorem~\ref{thm:one-volume}. A natural next step would therefore  be to study a model in which $D\neq\bar C$, e.g., a translation-invariant model without disorder, or a single Brownian realisation, and to ask whether the zero mode can approach the $t^d$ bound with a coefficient different from that of $\bar C$.

\begin{acknowledgments}
The work of S.G.\ is supported by the STFC Ernest Rutherford Fellowship ST/T00388X/1, while T.P.\ is 
supported by ERC Advanced grant
 No.~101096208 -- QUEST.
The work is also supported by the research programme P1-0402 (S.G. and T.P.) and the projects J7-60121 (S.G.) and N1-0368 (T.P.) of the Slovenian Research and Innovation Agency (ARIS). 

We acknowledge the use of the large language models ChatGPT (OpenAI) and Claude (Anthropic) for language editing, checking the internal consistency of derivations and references, assisting with numerics, and suggesting improvements to the presentation. All results were derived and verified by the authors.
\end{acknowledgments} 

\appendix
\section{Thermal norm and a block commutator inequality}
\label{app:thermal}

For the density matrix $\rho=e^{-\beta H}/Z$, the symmetric thermal inner product and norm are
\begin{equation}
\langle A,B\rangle_{\beta}
=\Tr\!\left(\rho^{1/2}A^\dagger\rho^{1/2}B\right),
\qquad
\norm{A}_{\beta}^{2}=\langle A,A\rangle_{\beta}.
\label{eq:app-KMS-def}
\end{equation}
To distinguish this thermal norm from ordinary Schatten norms, we use the notation
$\norm{X}_{S_p}\equiv[\Tr (X^\dagger X)^{p/2}]^{1/p}$ for the Schatten $p$-norm, $1\le p<\infty$, while $\norm{X}_\infty$ denotes the operator norm.  Thus, subscripts $S_2,S_4,\ldots$ always refer to Schatten norms, whereas the subscript $\beta$ refers exclusively to the symmetric thermal norm.
Since $\langle\one,A\rangle_\beta=\langle A\rangle_\beta$ and $\norm{\one}_\beta=1$, orthogonal projection away from the identity gives
\begin{equation}
\norm{\cent A}_{\beta}^{2}
=\norm{A}_{\beta}^{2}-\abs{\langle A\rangle_\beta}^{2}
\le\norm{A}_{\beta}^{2}.
\label{eq:app-centering-contracts}
\end{equation}
Moreover,
\begin{equation}
\norm{A}_{\beta}
=\norm{\rho^{1/4}A\rho^{1/4}}_{S_2}
\le\norm{A}_\infty,
\label{eq:app-KMS-op}
\end{equation}
by the Schatten--H\"older inequality and $\Tr\rho=1$.

\begin{lemma}[Thermal $L_4$ commutator bound]
\label{lem:L4}
For Hermitian operators $A,B$,
\begin{equation}
\norm{\cent{[A,B]}}_\beta^2
\le4\sqrt{\langle A^4\rangle_\beta\langle B^4\rangle_\beta}.
\label{eq:app-L4}
\end{equation}
\end{lemma}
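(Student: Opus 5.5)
The plan is to drop the centring at once using Eq.~\eqref{eq:app-centering-contracts}, so it suffices to show $\norm{[A,B]}_\beta^2\le 4\sqrt{\langle A^4\rangle_\beta\langle B^4\rangle_\beta}$. By the triangle inequality in the thermal norm, $\norm{[A,B]}_\beta\le\norm{AB}_\beta+\norm{BA}_\beta$. Squaring and using $(a+b)^2\le 2(a^2+b^2)$ reduces the lemma to the single estimate
\begin{equation*}
\norm{AB}_\beta^2\le\sqrt{\langle A^4\rangle_\beta\langle B^4\rangle_\beta},
\end{equation*}
together with the analogous bound for $BA$, which follows by exchanging the roles of $A$ and $B$. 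Summing the two gives exactly the constant $4$.

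For the key estimate I would write $\norm{AB}_\beta^2=\Tr(\rho^{1/2}BA\rho^{1/2}AB)=\Tr(X^\dagger X)$ with $X=\rho^{1/4}AB\rho^{1/4}$. The natural factorisation is
\begin{equation*}
X=(\rho^{1/4}A\rho^{-1/4})(\rho^{1/4}B\rho^{1/4}).
\end{equation*}
This is where the main obstacle lies: the noncommuting powers of $\rho$ must be distributed so that only the symmetric quantities $\langle A^4\rangle_\beta$ and $\langle B^4\rangle_\beta$ appear. Negative powers of $\rho$ cannot be controlled uniformly, so the factorisation above is not usable as it stands.

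I would first try a cleaner route through Cauchy--Schwarz in the Hilbert--Schmidt inner product. Write $\norm{AB}_\beta^2=\Tr(\rho^{1/2}BA^2B)$ after cycling, but keeping track of where $\rho^{1/2}$ sits. Concretely, $\Tr(\rho^{1/2}BA\,\rho^{1/2}AB)=\Tr\big[(A\rho^{1/2}BA)^\dagger$-type pairings$\big]$, and I would split it as $\Tr(Y^\dagger Z)$ with $Y=\rho^{1/4}A\rho^{1/4}\cdot$ and $Z$ built symmetrically. The intended outcome is
\begin{equation*}
\abs{\Tr(\rho^{1/2}BA\rho^{1/2}AB)}\le\big[\Tr(\rho^{1/2}A^2\rho^{1/2}A^2)\big]^{1/2}\big[\Tr(\rho^{1/2}B^2\rho^{1/2}B^2)\big]^{1/2}\cdot(\ldots).
\end{equation*}
A more reliable fallback is the operator inequality $BA^2B\le\norm{A}^2B^2$, but that uses the operator norm, which is the wrong quantity here. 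So instead I would use the interpolation (Hölder) inequality for the KMS-weighted trace: $\abs{\Tr(\rho^{s}X_1\rho^{1/2-s}X_2\cdots)}$ with the weights spread evenly is bounded by the product of $\Tr(\rho X_i^4)^{1/4}$. This is the noncommutative Hölder inequality applied to the vectors $\rho^{1/4}X_i$, that is, Schatten--Hölder with exponents $(4,4,4,4)$ on operators of the form $\rho^{1/4}A$, $A\rho^{1/4}$, $\rho^{1/4}B$, $B\rho^{1/4}$. For these, $\norm{\rho^{1/4}A}_{S_4}^4=\Tr\big((A\rho^{1/2}A)^2\big)$.

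The last step is to bound $\Tr(A\rho^{1/2}A\,A\rho^{1/2}A)=\Tr(\rho^{1/2}A^2\rho^{1/2}A^2)\le\Tr(\rho A^4)$. This holds by Cauchy--Schwarz in the Hilbert--Schmidt inner product with $\rho^{1/4}A^2\rho^{1/4}$ against itself, or equivalently by the Lieb--Araki-type inequality $\Tr(\rho^{s}X\rho^{1-s}X)\le\Tr(\rho X^2)$ for Hermitian $X=A^2$. This gives the fourth-root bound $\langle A^4\rangle_\beta^{1/4}$ for each of the two $A$ factors, and the same for $B$, which yields $\norm{AB}_\beta^2\le\sqrt{\langle A^4\rangle_\beta\langle B^4\rangle_\beta}$. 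Checking that the Schatten--Hölder distribution of $\rho^{1/4}$ factors closes exactly, without any stray negative powers, is the step I expect to need the most care.
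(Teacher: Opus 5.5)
Your final chain is correct and is essentially the paper's proof. You discard the centring via Eq.~\eqref{eq:app-centering-contracts}, apply the triangle inequality, cycle to $\Tr[(\rho^{1/4}B)(A\rho^{1/4})(\rho^{1/4}A)(B\rho^{1/4})]$ and use Schatten--H\"older with exponents $(4,4,4,4)$, so no negative powers of $\rho$ appear; the paper's $\norm{\rho^{1/4}AB\rho^{1/4}}_{S_2}\le\norm{\rho^{1/4}A}_{S_4}\norm{B\rho^{1/4}}_{S_4}$ is the same step. One correction to the final step $\Tr(\rho^{1/2}A^2\rho^{1/2}A^2)\le\Tr(\rho A^4)$: Cauchy--Schwarz of $\rho^{1/4}A^2\rho^{1/4}$ against itself is an identity and proves nothing. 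The paper instead pairs $X=\rho^{1/2}A^2$ with $X^\dagger$, giving $\abs{\Tr X^2}\le\Tr(X^\dagger X)=\Tr(\rho A^4)$. Your alternative, $\Tr(\rho^{s}X\rho^{1-s}X)\le\Tr(\rho X^2)$ (weighted AM--GM in the eigenbasis of $\rho$), is also valid and suffices.
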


\begin{proof}
The Schatten--H\"older inequality gives
\begin{align}
\norm{AB}_\beta
&=\norm{\rho^{1/4}AB\rho^{1/4}}_{S_2}
\nonumber\\
&\le\norm{\rho^{1/4}A}_{S_4}\norm{B\rho^{1/4}}_{S_4}.
\label{eq:app-L4-holder}
\end{align}
For the first factor,
\begin{align}
&\norm{\rho^{1/4}A}_{S_4}^4
=\Tr\!\left(\rho^{1/2}A^2\rho^{1/2}A^2\right)
\nonumber\\
&\le
\left[\Tr\!\left((\rho^{1/2}A^2)^\dagger
\rho^{1/2}A^2\right)\right]^{1/2}
\left[\Tr\!\left(\rho^{1/2}A^2
(\rho^{1/2}A^2)^\dagger\right)\right]^{1/2}
\nonumber\\
&=\Tr(\rho A^4)=\langle A^4\rangle_\beta.
\label{eq:app-L4-factor}
\end{align}
This is the Hilbert--Schmidt Cauchy--Schwarz inequality with $X=\rho^{1/2}A^2$, namely,   $|\Tr X^2|\le[\Tr(X^\dagger X)\Tr(XX^\dagger)]^{1/2}$. The same estimate holds for $\norm{B\rho^{1/4}}_{S_4}$, and hence
\begin{equation}
\norm{AB}_\beta
\le[\langle A^4\rangle_\beta\langle B^4\rangle_\beta]^{1/4}.
\end{equation}
The same bound holds for $BA$. The use of triangle inequality and Eq.~\eqref{eq:app-centering-contracts} then proves Eq.~\eqref{eq:app-L4}.
\end{proof}

The point of Lemma~\ref{lem:L4} is that $A$ and $B$ can themselves be extensive block observables. For a block containing $O(R^d)$ local terms, clustering will give $\langle A^4\rangle_\beta=O(R^{2d})$. Equation~\eqref{eq:app-L4} then keeps the squared block commutator at the same $O(R^{2d})$ scale, and no operator-norm bound on the full block is necessary. If a commutator is anti-Hermitian, multiplication by $i$ makes it Hermitian without changing its symmetric thermal norm or covariance.

\section{Consequences of ordinary Gibbs clustering}
\label{app:clustering}

We assume Eq.~\eqref{eq:Gibbs-clustering-main} throughout this appendix and keep all constants uniform in volume.

\subsection{\texorpdfstring{Clustering at the centre of the thermal strip}{Clustering at the centre of the thermal strip}}

\begin{lemma}[Symmetric-thermal clustering]
\label{lem:KMS-clustering}
For centred local Hermitian operators $a_X,b_Y$, there are constants $K,c>0$, an integer $m$, and a fixed polynomial $P_{\rm KMS}(X,Y)$ in the support sizes and diameters, all uniform in volume, such that
\begin{align}
\abs{\langle a_X,  b_Y\rangle_\beta}
&\le K P_{\rm KMS}(X,Y) (1+r)^m e^{-cr}
\nonumber\\[-0.2em]
&\quad \times\norm{a_X}_\infty\norm{b_Y}_\infty,
\qquad r= \dist(X,Y).
\label{eq:app-KMS-clustering}
\end{align}
For fixed-support operators $P_{\rm KMS}$ is an $O(1)$ constant.
\end{lemma}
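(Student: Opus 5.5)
The plan is to move $\rho^{1/2}$ from between the operators to the outside by analytic continuation in imaginary time, and then to control the imaginary-time evolution of $a_X$ by a Lieb--Robinson truncation. Write $\langle a_X,b_Y\rangle_\beta=\Tr(\rho\, a_X(i\beta/2)\, b_Y)$ with $a_X(z)=e^{izH}a_Xe^{-izH}$. In general $a_X(i\beta/2)$ is not quasi-local, since imaginary-time Lieb--Robinson bounds only hold for short times. We therefore work with the function $f(z)=\Tr(\rho\, a_X(z) b_Y)$. It is analytic in the strip $0\le \Im z\le \beta$ and bounded there by $\norm{a_X}_\infty\norm{b_Y}_\infty$, because $\rho^{1-s}$ and $\rho^{s}$ split the trace via Hölder. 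Centring is preserved along the strip, so $\Tr(\rho\,a_X(z))\Tr(\rho\,b_Y)=0$. Hence $f(z)$ equals the connected correlator $\Tr(\rho\,a_X(z)b_Y)-\langle a_X(z)\rangle_\beta\langle b_Y\rangle_\beta$ throughout the strip.

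First, on the real line $z=t$, replace $a_X(t)$ by its Lieb--Robinson truncation $a_X^{(\ell)}(t)$, supported on the $\ell$-enlargement $X_\ell$ of $X$. The error is $\lesssim |X|\norm{a_X}_\infty e^{-\mu_{\rm LR}(\ell - v_{\rm LR}|t|)}$. Choose $\ell=r/2$ and apply \eqref{eq:Gibbs-clustering-main} to the truncated operator. This gives
\[
|f(t)|\le \norm{a_X}_\infty\norm{b_Y}_\infty\Big[K_\beta P_{\rm cl}(X_{r/2},Y)e^{-c_{\rm cl}r/2}+K|X|e^{-\mu_{\rm LR}(r/2-v_{\rm LR}|t|)}\Big].
\]
The same estimate holds on the upper edge $\Im z=\beta$, by the KMS relation $f(t+i\beta)=\Tr(\rho\, b_Y a_X(t))$, which is again a real-time correlator. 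Therefore on both boundary lines $|f|\le \epsilon(r)$ for $|t|\le T:=r/(4v_{\rm LR})$, with $\epsilon(r)\sim P(\cdot)e^{-c' r}$. Everywhere else on the boundary, $|f|\le \norm{a_X}_\infty\norm{b_Y}_\infty$.

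The main step, and the expected obstacle, is transferring this boundary smallness to the interior point $z=i\beta/2$. I would apply the two-constants theorem (harmonic measure) on the rectangle $\{|\Re z|<T,\,0<\Im z<\beta\}$. The vertical sides have harmonic measure $\omega\lesssim e^{-\pi T/\beta}$ as seen from $i\beta/2$. This yields
\[
|f(i\beta/2)|\le \epsilon(r)^{1-\omega}\,(\norm{a_X}_\infty\norm{b_Y}_\infty)^{\omega}.
\]
Since $\omega\to0$ as $r\to\infty$, we get $|f(i\beta/2)|\lesssim \epsilon(r)^{1-\omega}$, which is still exponentially small in $r$ with a slightly reduced rate $c$. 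A cleaner alternative is to use the three-lines/Hadamard bound with a Gaussian damping factor $e^{(z-i\beta/2)^2/\sigma}$. This trades the cutoff at $|t|=T$ for Gaussian decay and gives the same conclusion.

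The polynomial prefactors $(1+r)^m$ and $P_{\rm KMS}$ come from two sources. One is $P_{\rm cl}$ evaluated on the enlarged supports $X_{r/2}$, whose size grows like $r^d$ times the original support diameter. The other is the factor $|X|$ in the Lieb--Robinson tail. Both are absorbed into $K P_{\rm KMS}(X,Y)(1+r)^m$, after reducing $c$ slightly if necessary. For fixed-support operators all these factors are constants, which gives the final remark of the statement. The only delicate bookkeeping is keeping the harmonic-measure exponent $1-\omega$ uniform in $r$, which holds for $r\ge r_0$. For $r<r_0$ the bound is trivial from $|\langle a_X,b_Y\rangle_\beta|\le\norm{a_X}_\infty\norm{b_Y}_\infty$.
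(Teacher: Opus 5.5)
Your proposal is correct and is essentially the paper's argument: continue across the thermal strip, bound the two edge values (which are ordinary real-time connected correlators) by a Lieb--Robinson truncation plus Gibbs clustering before the light cones overlap and by H\"older afterwards, then interpolate to the centre. Your two-constants/harmonic-measure estimate on a rectangle of half-width $T\propto r$ is a minor variant of the paper's Hadamard three-lines bound with Gaussian damping $e^{a(z-1/2)^2}$, $a\propto 1/r$, which you mention yourself as an alternative. One harmless slip: with $a_X(z)=e^{izH}a_Xe^{-izH}$, the function $\Tr(\rho\,a_X(z)b_Y)$ is bounded on $-\beta\le\Im z\le 0$, its KMS edge is $\Im z=-\beta$, and $\langle a_X,b_Y\rangle_\beta$ is its value at $z=-i\beta/2$, so every imaginary part in your write-up should change sign.
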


\begin{proof}
For $0\le\Re z\le 1$, define
\begin{equation}
G(z)=\Tr\!\left(\rho^za_X\rho^{1-z}b_Y\right).
\label{eq:app-modular-G}
\end{equation}
At the centre, $G(1/2)=\langle a_X, b_Y\rangle_\beta$. On the boundaries,
\begin{align}
G(iu)&=\langle\tau_{\beta u}(b_Y)a_X\rangle_\beta,
\nonumber\\
G(1+iu)&=\langle\tau_{-\beta u}(a_X)b_Y\rangle_\beta.
\label{eq:app-modular-edges}
\end{align}
These are ordinary connected real-time correlators because the state is stationary.  The noncommutative H\"older inequality then gives the uniform strip bound
\begin{equation}
|G(s+iu)|\le\norm{a_X}_\infty\norm{b_Y}_\infty,
\qquad0\le s\le 1.
\label{eq:app-modular-bounded}
\end{equation}

First consider $G(iu)$, and note that the argument for the other boundary is identical. Before the real-time light cones overlap, approximate $\tau_{\beta u}(b_Y)$ by an operator supported in the $\ell$-neighbourhood of $Y$, centre that approximant by subtracting its thermal expectation, and take
\begin{equation}
\ell = \frac{1}{2} (r + v_{\rm LR} \beta|u|).
\end{equation}
The Lieb--Robinson truncation error and the remaining spatial separation are both proportional to $r-v_{\rm LR}\beta|u|$. The polynomial support factor generated by the truncation grows only as a fixed power of $r+|u|$. We keep it in the prefactor below. Then, it can be absorbed into a slightly weaker exponential at the end. Equation~\eqref{eq:Gibbs-clustering-main} therefore yields, for $v_{\rm LR}\beta|u|\le r/2$,
\begin{equation}
|G(iu)| \le K(1+r)^m
 e^{-c (r - v_{\rm LR}\beta|u|)}.
\label{eq:app-edge-before-overlap}
\end{equation}
For larger $|u|$, use Eq.~\eqref{eq:app-modular-bounded}.

For $r\ge 1$, choose
\begin{equation}
a = \frac{c_0(\beta v_{\rm LR})^2}{r},
\qquad
H(z)=e^{a(z-1/2)^2}G(z),
\label{eq:app-a-choice}
\end{equation}
with any fixed $c_0 > 0$. On either boundary,
\begin{equation}
|H(\sigma + iu)|=e^{a/4 - au^2}|G(\sigma+iu)|,
\qquad\sigma=0,1.
\end{equation}
Writing $x=\beta v_{\rm LR}|u|$, the exponent in the pre-overlap region has the form $-cr+cx-c'x^2/r$, which is bounded by $-c''r$ for $x\le r/2$.  For $x\ge r/2$, the Gaussian and the uniform strip bound again give $e^{-c''r}$. Thus,
\begin{equation}
\sup_u |H(\sigma + iu)| \le K(1+r)^m e^{-c''r},
\qquad\sigma=0, 1.
\end{equation}
Hadamard's three-lines theorem gives the same estimate at $z=1/2$. Changing constants covers bounded $r$. Only real-time Lieb--Robinson locality is used in this interpolation argument.
\end{proof}

At $\beta=0$, the symmetric thermal product is tracial (invariant under cyclic permutations) and this interpolation is unnecessary.

\subsection{Fourth moments of extensive local sums}

\begin{lemma}[Extensive fourth moment]
\label{lem:fourth-moment}
Let $a_x$ be a uniformly bounded family of centred Hermitian observables with fixed support radius. For every finite set $S$,
\begin{equation}
A_S = \sum_{x\in S}a_x
\qquad \longrightarrow\qquad
\langle A_S^4\rangle_\beta\le K_\beta|S|^2.
\label{eq:app-fourth-moment}
\end{equation}
\end{lemma}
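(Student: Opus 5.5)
Expanding the fourth power gives
\[
\langle A_S^4\rangle_\beta=\sum_{x_1,x_2,x_3,x_4\in S}\langle a_{x_1}a_{x_2}a_{x_3}a_{x_4}\rangle_\beta ,
\]
and the goal is to show that, after summing, only $O(|S|^2)$ of this weight survives. This is the standard cumulant/cluster argument for quadruples, carried out with the ordered (noncommutative) Gibbs expectation and only the two-block clustering inequality \eqref{eq:Gibbs-clustering-main}. For each ordered quadruple $(x_1,\dots,x_4)$ we look at the largest gap in the chain of positions and split the quadruple there. Concretely, let $g$ be the largest distance at which the index set $\{1,2,3,4\}$ can be divided into two nonempty groups $I,J$ whose supports are separated by at least $g$. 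The bound depends on which split achieves this. Write $d_{ij}=\dist(x_i,x_j)$.

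\emph{Case 1: one group is a singleton}, say $I=\{1\}$ separated from $\{2,3,4\}$ by distance $g$. We apply \eqref{eq:Gibbs-clustering-main} with $a_X=a_{x_1}$ and $b_Y=a_{x_2}a_{x_3}a_{x_4}$. The singleton need not sit on one side of the ordered product. When it does not, we use the KMS interpolation of Lemma~\ref{lem:KMS-clustering}, or we simply note that at distance $\ge g$ the operators commute up to support overlap, so the product can be reordered exactly. The support of $Y$ has diameter at most $3g+O(1)$ by maximality of $g$, so the polynomial prefactor $P_{\rm cl}$ is polynomial in $g$. Since $\langle a_{x_1}\rangle_\beta=0$, the term is bounded by $K\,\mathrm{poly}(g)e^{-cg}$. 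For fixed $x_1$, the number of configurations with all pairwise distances $\lesssim g$ is $O(g^{3d})$. This case therefore contributes $O(|S|)$.

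\emph{Case 2: a two--two split}, say $\{i,j\}\mid\{k,l\}$. Here clustering gives
\[
\langle a_{x_1}\cdots a_{x_4}\rangle_\beta=\langle a_{x_i}a_{x_j}\rangle_\beta\langle a_{x_k}a_{x_l}\rangle_\beta+O(\mathrm{poly}(g)e^{-cg}),
\]
after reordering the commuting far-separated factors. When the pairs are interleaved in the ordered product (e.g.\ $\{1,3\}\mid\{2,4\}$), commuting them apart is exact because the supports are disjoint. The error term is summed as in Case 1 and contributes $O(|S|)$. For the product term, we do not need $g$ large. Two-point clustering gives $|\langle a_xa_y\rangle_\beta|\le Ke^{-c\,\dist(x,y)}$, so
\[
\sum_{x,y}|\langle a_xa_y\rangle_\beta|\le K|S| .
\]
The sum of products of two such factors over all quadruples is therefore at most $(K|S|)^2$, which is the leading $|S|^2$ term. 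To avoid overcounting between cases, I will bound every quadruple by the maximum of the applicable estimates and sum generously. All bounds are positive, so overcounting only changes constants.

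The main obstacle is the bookkeeping that guarantees the exponential factor always beats the polynomial entropy $g^{3d}$ and the support polynomial $P_{\rm cl}$. This requires choosing the split at the maximal gap so that the intra-group diameters are controlled by $O(g)$. A minor point is that quadruples with coincident or overlapping supports must be handled as well; these are simply absorbed into the bounded-$g$ regime using $\norm{a_x}_\infty\le K$.
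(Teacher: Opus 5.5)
Your proposal is correct and follows essentially the same route as the paper. In both, each ordered quadruple is split at a gap comparable to its diameter; your maximal-gap (longest minimum-spanning-tree edge) split plays the role of the paper's $D/3$ chain-connectivity grouping, since both guarantee a gap of at least $\diam/3$. The disjointly supported factors are then regrouped exactly, centring kills the $1|3$ factorised term, two-point clustering controls the $2|2$ factorised term, and the sum is $O(|S|)+O(|S|^2)$. The only superfluous remark is your appeal to the KMS interpolation of Lemma~\ref{lem:KMS-clustering}: that lemma concerns the symmetric inner product, and the exact reordering you also mention is all that is needed.
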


\begin{proof}
Expand the fourth moment into ordered quadruples,
\begin{equation}
\langle A_S^4\rangle_\beta
=\sum_{x_1, x_2, x_3, x_4 \in S}
\langle a_{x_1}a_{x_2}a_{x_3}a_{x_4}\rangle_\beta.
\label{eq:app-fourth-expand}
\end{equation}
For a fixed quadruple, we need to separate it into two groups whose distance is comparable with the size of the quadruple. Let $D=\diam(x_1,x_2,x_3,x_4)$ and call two support centres close if their distance is smaller than $D/3$. Split the four centres into groups of centres that are connected by chains of close pairs. If all four belonged to a single group, the two most distant centres would be connected by a chain of at most three close pairs, giving $D<D$, a contradiction. Hence, there are at least two groups. Let $I$ be one of them, let $J$ be the union of the others, and let $\ell$ be the distance between $I$ and $J$. By construction, $I$ and $J$ are nonempty and
\begin{equation}
\ell \ge \frac{1}{3} \diam(x_1, x_2, x_3, x_4).
\label{eq:app-MST-diam}
\end{equation}
If $\ell$ is smaller than a fixed multiple of the support radius, the trivial operator-norm bound is absorbed into $Ke^{-c\ell}$. Otherwise, the two groups have disjoint supports, so operators belonging to different groups commute and the ordered product can be regrouped as $a_I b_J$.  The clustering assumption~\eqref{eq:Gibbs-clustering-main} was stated for arbitrary local operators, so it applies directly to the products $a_I$ and $b_J$.  If one starts from a Hermitian-only formulation, the same statement follows by decomposing each product into its Hermitian and anti-Hermitian parts.  The support polynomial is at most polynomial in the diameter of the quadruple and can be absorbed into $e^{-c'\ell}$ with $c' < c$.  Thus, after changing constants,
\begin{equation}
\abs{\langle a_Ib_J\rangle_\beta
-\langle a_I\rangle_\beta \langle b_J\rangle_\beta}
\le K e^{-c\ell}.
\label{eq:app-fourth-cut}
\end{equation}
There are two possibilities. If one operator is separated from the other three, then the factorised term vanishes because each $a_x$ is centred. If instead the four operators separate into two pairs, the factorised term is a product of two connected two-point functions, each of which is exponentially small by clustering. Therefore, after enlarging constants,
\begin{align}
&\abs{\langle a_{x_1} a_{x_2} a_{x_3}a_{x_4}\rangle_\beta}
\nonumber\\
&\quad\le
Ke^{-c \diam(x_1,x_2,x_3,x_4)}
\nonumber\\[-0.2em]
&\qquad
+K\!\sum_{\text{three pairings}}
 e^{-c \dist(x_i,x_j)}e^{-c\dist(x_k,x_l)}.
\label{eq:app-fourth-pointwise}
\end{align}
For the first term, fix $x_1$. Polynomial lattice growth implies that the number of triples with diameter in $[r,r+1)$ is polynomial in $r$, so the exponential shell sum is finite; its total contribution is $O(|S|)$. For each pairing,
\begin{equation}
\sum_{x, y \in S}e^{- c \dist(x,y)} \le K|S|,
\end{equation}
and the corresponding fourfold sum is therefore $O(|S|^2)$. Summing Eq.~\eqref{eq:app-fourth-pointwise} proves the lemma.
\end{proof}

This direct fourth-moment proof is all that we need below. General results showing that two-point clustering propagates to higher connected correlations in quantum lattice systems are given in Ref.~\cite{AmpelogiannisDoyon2024}.

As a simpler corollary of Lemma~\ref{lem:KMS-clustering}, any centred extensive local sum also has an extensive symmetric-thermal variance:
\begin{equation}
\norm{A_S}_\beta^2 \le K_\beta|S|.
\label{eq:app-second-moment}
\end{equation}

\section{Proof of the one-volume bounds}
\label{app:one-volume}

We begin by replacing $w_x,v_y$ by their centred versions, which  does not change any commutator. We write
\begin{equation}
W_B = \sum_{x\in B}w_x,
\qquad
V_S = \sum_{y\in S}v_y.
\end{equation}
The densities $w_x$ and $v_y$ have supports of a fixed finite radius around their sites. To keep the notation concise, all balls and enlargements below are understood to include this fixed radius, which does not affect any of the estimates.

\subsection{Pair and one-source estimates}

For every pair,
\begin{equation}
\norm{\cent{[w_x(t), v_y]}}_\beta
\le\norm{[w_x(t), v_y]}_\infty.
\end{equation}
The Lieb--Robinson shell sum with $R_{\rm LR}$ from Eq.~\eqref{eq:RLR-main} gives
\begin{align}
\sum_y \norm{\cent{[w_x(t), v_y]}}_\beta^2
&\le K\sum_{r\le R_{\rm LR}}(1+r)^{d-1}
\nonumber\\
&\quad+K\sum_{r>R_{\rm LR}}(1+r)^{d-1}e^{-2\mu_{\rm LR}(r-R_{\rm LR})}
\nonumber\\
&\le K R_{\rm LR}^d.
\label{eq:app-pair-shell}
\end{align}
Averaging over $x$ proves Eq.~\eqref{eq:Csum-bound-main} when $R_{\rm LR}^d \le L$, and the trivial bound $\bar{C}^{(L)} \le KL$  takes care of the complementary finite-size regime.

Now define
\begin{equation}
M_x(t) = \cent{[w_x(t),V]}.
\label{eq:app-Mx}
\end{equation}
Let $S_x$ be a ball of radius $aR_{\rm LR}$ about $x$, with fixed $a>1$ sufficiently large. Summed Lieb--Robinson tails give
\begin{equation}
\norm{[w_x(t), V-V_{S_x}]}_\infty\le Ke^{-c R_{\rm LR}}.
\label{eq:app-source-tail}
\end{equation}
For the local target sum, Lemmas~\ref{lem:L4} and \ref{lem:fourth-moment}, together with stationarity, yield
\begin{align}
\norm{\cent{[w_x(t),V_{S_x}]}}_\beta^2
&\le4\sqrt{\langle w_x(t)^4\rangle_\beta
              \langle V_{S_x}^4\rangle_\beta}
\nonumber\\
&=4\sqrt{\langle w_x^4\rangle_\beta
              \langle V_{S_x}^4\rangle_\beta}
\le K_\beta R_{\rm LR}^d.
\label{eq:app-source-bound}
\end{align}
The tail in Eq.~\eqref{eq:app-source-tail} is negligible, so
\begin{equation}
\norm{M_x(t)}_\beta^2 \le K_\beta R_{\rm LR}^d.
\label{eq:app-Mx-bound}
\end{equation}
For later use, define the source-resolved average
\begin{equation}
S_L(t)=\frac1L\sum_x\norm{M_x(t)}_\beta^2.
\label{eq:app-S-def}
\end{equation}
Averaging gives $S_L(t)\le K_\beta\min\{R_{\rm LR}^d,L\}$, with the global fourth-moment estimate supplying the finite-size cap.

\subsection{Causal source blocks}

Assume first that $R_{\rm LR}^d \le cL$, and partition each periodic direction into intervals of length between $R_{\rm LR}$ and $2R_{\rm LR}$, producing cubes $B$ with comparable side lengths. Define
\begin{equation}
M_B(t) = \sum_{x\in B} M_x(t) = \cent{[W_B(t),V]}.
\label{eq:app-MB}
\end{equation}
Let $S_B$ be an $aR_{\rm LR}$-enlargement of $B$. Then
\begin{equation}
\norm{[W_B(t), V-V_{S_B}]}_\infty
\le K |B| e^{- c R_{\rm LR}}.
\label{eq:app-block-tail}
\end{equation}
For the local part,
\begin{align}
\norm{\cent{[W_B(t), V_{S_B}]}}_\beta^2
&\le 4 \sqrt{\langle W_B(t)^4\rangle_\beta
              \langle V_{S_B}^4\rangle_\beta}
\nonumber\\
&=4 \sqrt{\langle W_B^4 \rangle_\beta
              \langle V_{S_B}^4 \rangle_\beta}
\nonumber\\
&\le K_\beta |B||S_B|
\le K_\beta |B|R_{\rm LR}^d.
\label{eq:app-block-bound}
\end{align}
The tail obeys the same weaker bound after changing the constant. Hence
\begin{equation}
\norm{M_B(t)}_\beta^2\le K_\beta|B|R_{\rm LR}^d.
\label{eq:app-MB-bound}
\end{equation}
This is the step that saves one volume. A causal block contains $|B|=\Theta(R_{\rm LR}^d)$ sources. Estimating its sources one by one would allow another factor of  $R_{\rm LR}^d$. Equation~\eqref{eq:app-MB-bound} keeps the whole block at $O(R_{\rm LR}^{2d})$ before the sum over blocks is performed.

\subsection{Distant source covariances}

Let $r = \dist(x,x') \ge b R_{\rm LR}$, with $b$ a sufficiently large constant. Choose a Hermitian Lieb--Robinson truncation $w_x^{(\ell)}(t)$, supported in the $\ell$-ball about $x$, with $\ell=\lfloor r/4 \rfloor$. Define
\begin{equation}
M_x^{(\ell)}(t) = \cent{[w_x^{(\ell)} (t), V]}.
\end{equation}
Only target densities meeting the truncated support contribute, so $M_x^{(\ell)}$ is supported in a fixed enlargement of that ball and
\begin{equation}
\norm{M_x^{(\ell)}(t)}_\infty \le K(1+\ell)^d.
\label{eq:app-Mx-local-op}
\end{equation}
For completeness, let $T_x$ be a ball of radius $2\ell$ around $x$ and write $V=V_{T_x}+V_{T_x^c}$. For the targets in $T_x$,
\begin{equation}
\norm{[w_x(t)-w_x^{(\ell)}(t), V_{T_x}]}_\beta
\le K(1 + \ell)^d e^{-\mu_{\rm LR}(\ell - v_{\rm LR} |t| - R_0)},
\end{equation}
using the Lieb--Robinson truncation error and the number of local terms in $T_x$. For the targets outside $T_x$, the commutator with $w_x^{(\ell)}(t)$ vanishes after enlarging $T_x$ by the fixed support radius, while the commutator with $w_x(t)$ is bounded by the usual Lieb--Robinson shell sum. Combining the two parts gives
\begin{equation}
\norm{M_x(t)-M_x^{(\ell)}(t)}_\beta
\le K(1+\ell)^d e^{-\mu_{\rm LR}(\ell-v_{\rm LR}|t|-R_0)}.
\label{eq:app-Mx-localise}
\end{equation}
The two truncated commutators centred at $x$ and $x'$ are separated by a fixed fraction of $r$. Applying Lemma~\ref{lem:KMS-clustering} to their Hermitian multiples $iM_x^{(\ell)}$ and $iM_{x'}^{(\ell)}$, and using Eqs.~\eqref{eq:app-Mx-bound}--\eqref{eq:app-Mx-localise} for the errors, gives
\begin{equation}
\abs{\langle M_x(t),M_{x'}(t)\rangle_\beta}
\le K_\beta(1+r)^m e^{-cr},
\qquad r\ge bR_{\rm LR}.
\label{eq:app-far-source}
\end{equation}
All powers of $R_{\rm LR}$ are absorbed into the polynomial because $R_{\rm LR}\le r/b$.

\subsection{The global block sum}

Since $\sum_BM_B=\cent{[W(t),V]}$,
\begin{equation}
LD^{(L)}(t)
=\sum_{B,B'}\langle M_B(t),M_{B'}(t)\rangle_\beta.
\label{eq:app-global-block-sum}
\end{equation}
Call two blocks near if their distance is at most $bR_{\rm LR}$. Each block has only $O(1)$ near neighbours. The Cauchy--Schwarz inequality and Eq.~\eqref{eq:app-MB-bound} therefore give
\begin{align}
\sum_{B,B'\,\mathrm{near}}
\abs{\langle M_B,M_{B'}\rangle_\beta}
&\le K\sum_B\norm{M_B}_\beta^2
\nonumber\\
&\le K_\beta R_{\rm LR}^d\sum_B|B|
=K_\beta LR_{\rm LR}^d.
\label{eq:app-near-blocks}
\end{align}
For far blocks, expand back into sources and use Eq.~\eqref{eq:app-far-source}:
\begin{align}
\sum_{B,B'\,\mathrm{far}}
\abs{\langle M_B,M_{B'}\rangle_\beta}
&\le\sum_x
\sum_{\substack{x':\,\dist(x,x')\\ \ge bR_{\rm LR}}}
 K_\beta(1+r)^m e^{-cr}
\nonumber\\
&\le K_\beta L.
\label{eq:app-far-blocks}
\end{align}
Equations~\eqref{eq:app-global-block-sum}--\eqref{eq:app-far-blocks} prove $D^{(L)}(t)\le K_\beta R_{\rm LR}^d$ when $R_{\rm LR}^d\lesssim L$.

If $R_{\rm LR}^d\gtrsim L$, use the entire lattice as one block. Lemmas~\ref{lem:L4} and \ref{lem:fourth-moment} give
\begin{align}
LD^{(L)}(t)
&\le4\sqrt{\langle W(t)^4  \rangle_\beta
             \langle V^4\rangle_\beta}
\nonumber\\
&=4 \sqrt{\langle W^4\rangle_\beta
           \langle V^4\rangle_\beta}
\le K_\beta L^2.
\label{eq:app-global-finite-size}
\end{align}
Thus, $D^{(L)}(t)\le K_\beta L\le K_\beta R_{\rm LR}^d$.  This completes the proof of Theorem~\ref{thm:one-volume}.

\subsection{Momentum and random-sign identities}

On a periodic lattice define
\begin{equation}
W_p=\sum_xe^{-ip\cdot x}w_x,
\qquad
V_k=\sum_ye^{-ik\cdot y}v_y,
\end{equation}
and
\begin{equation}
\mathcal P(p,k;t)=\frac1L\norm{\cent{[W_p(t),V_k]}}_\beta^2\ge0.
\label{eq:app-P}
\end{equation}
Double Fourier orthogonality gives
\begin{align}
\bar{C}^{(L)}(t)&=\frac1{L^2}\sum_{p,k}\mathcal P(p,k;t),
\nonumber\\
S_L(t)&=\frac1L\sum_p\mathcal P(p,0;t),
\nonumber\\
D^{(L)}(t)&=\mathcal P(0,0;t).
\label{eq:app-P-rel1}
\end{align}
Thus,  $\bar{C}^{(L)}$ is the two-momentum average, $S_L$ fixes the target momentum to zero, and the dOTOC is the double zero mode.

The same distinction can be made without Fourier space. Let $\epsilon_x$ and $\eta_y$ be independent random signs and set
\begin{equation}
W_\epsilon=\sum_x\epsilon_xw_x,
\qquad
V_\eta=\sum_y\eta_yv_y.
\end{equation}
Then
\begin{equation}
\E_{\epsilon,\eta}\frac1L
\norm{\cent{[W_\epsilon(t),V_\eta]}}_\beta^2
=\bar{C}^{(L)}(t),
\label{eq:app-random-sign-Csum}
\end{equation}
and randomising only the source gives $S_L$. Thus, $\bar C^{(L)}$ is a momentum average, while $D^{(L)}$ singles out the zero mode.

\subsection{Locality and positivity alone are insufficient for Theorem~\ref{thm:one-volume}}

The block-moment constraint is genuinely stronger than positivity and pairwise locality. Let $B_R(x)$ denote the lattice ball of radius $R$ about $x$. Let $\{e_y\}$ be orthonormal and define
\begin{equation}
a_{xy}^{(R)}=e_y\,\mathbf1_{\{y\in B_R(x)\}}.
\label{eq:app-sharp-pair}
\end{equation}
Then, 
\begin{equation}
\frac1L\sum_{x,y}\norm{a_{xy}^{(R)}}^2
=\Theta(R^d),
\end{equation}
but
\begin{equation}
\frac1L\norm{\sum_{x,y}a_{xy}^{(R)}}^2
=\Theta(R^{2d}).
\label{eq:app-sharp-result}
\end{equation}
The construction is purely kinematic. For a causal source block containing $n=\Theta(R^d)$ sources, its coherent block sum has squared norm $\Theta(n^3)$, whereas Eq.~\eqref{eq:app-block-bound} permits only $O(n^2)$. The toy example therefore explains why locality and positivity do not prove the theorem by themselves. The fourth-moment constraint on a causal source block is the additional input that rules out this coherent $R^{2d}$ scaling.

\section{Positive spatial averages and the Schwarz--Pick theorem}
\label{app:spatial-MSS}

Let
\begin{equation}
S_\beta = \{z \in\mathbb C:\ |\Im z|<\beta/4\},
\end{equation}
and suppose each normalised local OTOC $F_r:S_\beta\to\mathbb D$ is analytic. For positive weights $a_r\ge0$, with $a_{\rm tot}=\sum_ra_r>0$, define
\begin{equation}
F_a(z)=\frac1{a_{\rm tot}}\sum_ra_rF_r(z).
\end{equation}
The unit disk is convex, so $|F_a|\le 1$. Infinite summable families follow by uniform convergence.

The conformal map
\begin{equation}
\phi(z)= \tanh\frac{\pi z}{\beta}
\end{equation}
sends $S_\beta$ to the unit disk. The Schwarz--Pick theorem gives, for real $t$,
\begin{equation}
\frac{|\partial_t F_a(t)|}{1- |F_a(t)|^2}
\le\frac{|\phi'(t)|}{1-|\phi(t)|^2}
=\frac{\pi}{\beta}.
\label{eq:app-SP-strip}
\end{equation}
For
\begin{equation}
G_a(t) = \sum_ra_r[1-F_r(t)]=a_{\rm tot}[1-F_a(t)],
\end{equation}
and using $1-|F|^2\le2|1-F|$,
\begin{equation}
|\partial_tG_a(t)| \le\frac{2\pi}{\beta}|G_a(t)|.
\label{eq:app-positive-MSS}
\end{equation}
If $F_a(t)$ is real and the deficit is increasing, one also has
\begin{equation}
\partial_tG_a
\le\frac{2\pi}{\beta}G_a
-\frac{\pi}{\beta a_{\rm tot}} G_a^2.
\end{equation}

For finite volume and $t_2\ge t_1$, integration gives
\begin{equation}
|G_a(t_2)|
\le e^{\frac{2\pi}{\beta}(t_2-t_1)}|G_a(t_1)|.
\label{eq:app-positive-MSS-integrated}
\end{equation}
Now let $\eps_N$ be the small large-$N$ parameter and assume the fixed-time normalised limit
\begin{equation}
g_r(t)=\lim_{N\to\infty}\frac{1-F_r(t)}{\eps_N}
\label{eq:app-linearised-limit}
\end{equation}
exists. For Hermitian probes the MSS-regularised correlator is real on the real axis, and the assumed bound in Eq.~\eqref{eq:MSS-strip-local} gives $1-F_r(t) \ge 0$. Thus, for every $\mu$ for which the sum converges,
\begin{equation}
G_\mu^{\rm lin}(t) =\sum_r e^{\mu\cdot r} g_r(t)
\end{equation}
is positive.  Next, we apply Eq.~\eqref{eq:app-positive-MSS-integrated} to a fixed finite set of sites and divide by $\eps_N$. We first take the thermodynamic and large-$N$ limits for this finite set and then let the set cover the lattice. Since the summands are non-negative, monotone convergence applies.  This yields the following inequality:
\begin{equation}
G_\mu^{\rm lin}(t_2)
\le e^{\frac{2\pi}{\beta}(t_2-t_1)}G_\mu^{\rm lin}(t_1).
\label{eq:app-linearised-integrated}
\end{equation}

Assume now the uniform large-deviation form
\begin{equation}
g_r(t) = \exp\{t\lambda_F(r/t) +o(t)\},
\label{eq:app-linearised-LD}
\end{equation}
with  exponential tightness after each admissible tilt. Then,
\begin{equation}
\lim_{t\to\infty} \frac{1}{t} \ln G_\mu^{\rm lin} (t)
=\sup_v [\lambda_F(v)+\mu\cdot v].
\end{equation}
Applying Eq.~\eqref{eq:app-linearised-integrated} between a fixed $t_1$  and $t$, then taking $t\to\infty$, yields
\begin{equation}
\sup_v [\lambda_F(v) +\mu\cdot v]
\le\frac{2\pi}{\beta}.
\label{eq:app-Legendre-MSS}
\end{equation}
Equivalently,
\begin{equation}
\lambda_F(v) \le \frac{2\pi}{\beta}-\mu\cdot v.
\end{equation}
Optimising over admissible $\mu$ gives the Legendre-transform form of the velocity-dependent bound discussed in the main body of the paper.

\section{Spatial asymptotics at large \texorpdfstring{$N$}{N}}
\label{app:largeN}

\subsection{Before saturation}

Assume that on a bounded velocity domain ${\cal V}$ containing the origin,
\begin{equation}
C_r(t)=\eps_N\exp\{t\lambda(r/t)+o(t)\},
\label{eq:app-LD}
\end{equation}
uniformly, with $C_r\ge0$ and $\lambda(v)\le\lambda(0)=\lambda_L$. Here, the nonzero subexponential prefactor near $v=0$ is absorbed into the $o(t)$ term.  Assume also an exponentially smaller exterior,
\begin{equation}
\sum_{r/t \notin{\cal V}} C_r(t)
\le\eps_N e^{(\lambda_L-c) t+o(t)}.
\label{eq:app-LD-out}
\end{equation}
A single site with $r=O(1)$ gives the lower exponential rate $\lambda_L$. For the upper bound, there are only $O(t^d)$ sites with $r/t\in{\cal V}$, so
\begin{equation}
\sum_{r/t\in{\cal V}}C_r(t)
\le\eps_N t^d e^{\lambda_Lt+o(t)}.
\end{equation}
Together with Eq.~\eqref{eq:app-LD-out}, along any sequence
$t\to\infty$, $N\to\infty$,
\begin{equation}
\frac{1}{t} \ln\frac{\bar{C} (t)}{\eps_{N}}
\to\lambda_L.
\end{equation}
To keep the sequence before local saturation, the full profile in Eq.~\eqref{eq:app-LD}, including its subexponential prefactor, must tend uniformly to zero.

The prefactor follows from the local shape of $\lambda(v)$. If
\begin{equation}
\lambda(v) = \lambda_L-\frac{1}{2} v^{\mathsf T}\mathsf H v+O(|v|^3),
\qquad \mathsf H>0,
\end{equation}
then the saddle has $r=O(\sqrt t)$ and
\begin{equation}
t\lambda(r/t)=\lambda_Lt-\frac1{2t}r^{\mathsf T}\mathsf H r+o(1).
\end{equation}
If the prefactor tends to $b_0\neq0$ on this scale,
\begin{equation}
\bar{C}(t)\sim\eps_Ne^{\lambda_Lt}b_0
\frac{(2\pi t)^{d/2}}{\sqrt{\det\mathsf H}}.
\label{eq:app-Gaussian-Laplace}
\end{equation}
A heat-kernel prefactor $t^{-d/2}$ cancels this saddle volume, as in both explicit models. More generally, $\lambda_L-\lambda(v)\sim c|v|^\alpha$ then gives a saddle volume $t^{d(1-1/\alpha)}$  for $\alpha\ge 1$ without additional time-dependent prefactors.

\subsection{First and deep saturation}

In the OTOC `growth model' stated in Eq.~\eqref{eq:clipped}, local saturation is determined by
\begin{equation}
\eps_Ne^{t\lambda(r/t)}\sim1.
\end{equation}
Writing $\eps_N=e^{-\lambda_Lt_*}$ and $t=t_*+\Delta t$, with $\Delta t\ll t_*$, gives
\begin{equation}
r_{\rm sat}\sim t_*^{1-1/\alpha}(\Delta t)^{1/\alpha}
\label{eq:app-rsat-alpha}
\end{equation}
when $\lambda_L-\lambda(v)\sim c|v|^\alpha$. For a smooth quadratic maximum, $r_{\rm sat}\sim\sqrt{t_*\Delta t}$.  Thus, the first saturated region is not generically a rigid front of width proportional to $\Delta t$.

Deep after scrambling, much less information is required. Suppose there are constants $a_->0$, $a_+<\infty$ and velocities $0<v_-<v_+$ such that $C_r(t)\ge a_-$ for $|r|\le v_-\Delta t$ and $C_r(t)\le a_+$ for $|r|\le v_+\Delta t$, and suppose further that
\begin{equation}
\sum_{|r|>v_+\Delta t}C_r(t)=o((\Delta t)^d).
\label{eq:app-exterior}
\end{equation}
Positivity then gives
\begin{equation}
a_-|B_{v_-\Delta t}|
\le \bar{C}(t)
\le a_+|B_{v_+\Delta t}|+o((\Delta t)^d).
\end{equation}
Since $|B_{v\Delta t}|=\Theta((\Delta t)^d)$, we obtain $\bar{C}(t)=\Theta((\Delta t)^d)$.

\section{Brownian coupled clusters}
\label{app:BCC}

We use the Brownian coupled-cluster dynamics of Xu and Swingle~\cite{XuSwingle2019}. We quote their closed operator-weight equation and derive only the additional relation needed here: the full dOTOC equals the pair sum after Brownian averaging.

It is enough to show that every cross term between two distinct source--target pairs changes sign under a symmetry of the Brownian measure. For $w_x=\sigma^z_{x,1}$ and $v_x=\sigma^z_{x,2}$, conjugate an entire Brownian history by a fixed Pauli matrix on source spin $(x,1)$. The Brownian measure is invariant, while every commutator $[w_x(t), v_y]$ changes sign and commutators with a different source do not. If two distinct pairs share the same source, conjugating the appropriate target spin flips only one of them. Hence,  every off-diagonal pair inner product is odd under an exact measure-preserving transformation:
\begin{equation}
\E_J \left\langle [w_x(t),v_y],[w_{x'}(t), v_{y'}] \right\rangle_0=0,
\qquad (x,y)  \neq(x',y').
\label{eq:app-BCC-exact}
\end{equation}
Since the infinite-temperature trace of each commutator vanishes, Eq.~\eqref{eq:app-BCC-exact} gives Eq.~\eqref{eq:BCC-equality} exactly for every finite $L,N,t$.

For completeness, we state the Xu--Swingle asymptotic inputs used in the main text and the numerical procedure used for Fig.~\ref{fig:BCC-crossover}.  Here, $g$ is their dimensionless nearest-neighbour Brownian coupling.  Their closed cluster-weight description applies after a short microscopic internal-relaxation transient $t_{\rm iso} = O(1)$. In that scaling regime, the linearised cluster weight has total mass $N^{-1}e^{\gamma t}$, with $\gamma=3(1+g^2)$, and their probe-averaged squared commutator is $8/3$ times that weight.  After disorder averaging, permutation symmetry within a cluster makes the fixed target spin used here equivalent to the probe average away from the source cluster; the source-cluster mismatch is $O(N^{-1})$ and does not affect the asymptotic formulas below.  This gives
\begin{equation}
\E_J\bar{C}^{(L)}(t) = \frac{8}{3N} e^{\gamma t}[1+o(1)]
\label{eq:app-BCC-early}
\end{equation}
in the correlated scaling regime $t_{\rm iso}\ll t$, $t\to\infty$, $N\to\infty$, $N^{-1}e^{\gamma t}\to0$. The selected lattice front follows from the linearised discrete equation and has
\begin{equation}
v_B =\min_{\kappa > 0} \frac{3 [1+g^2\cosh \kappa]}{\kappa}.
\label{eq:app-BCC-vB}
\end{equation}
Behind that front the local squared commutator tends to $2$. Summing over the interval $|r|< v_B(N) t$ therefore gives
\begin{equation}
\E_J \bar{C}^{(L)} (t)= 4 v_B(N) t + o(t)
\label{eq:app-BCC-late}
\end{equation}
in the disorder-averaged large-$N$, `pre-wraparound' late-time regime. Combining Eqs.~\eqref{eq:app-BCC-exact}, \eqref{eq:app-BCC-early}, and \eqref{eq:app-BCC-late} yields the early and late asymptotics  quoted in Eqs.~\eqref{eq:BCC-early-main} and \eqref{eq:BCC-crossover}. 

\subsection{Numerical integration of the large-\texorpdfstring{$N$}{N} lattice equation}

The drift of the Brownian weight master equation gives Eq.~\eqref{eq:BCC-mean-field}. Its linearisation around $\phi=0$ is
\begin{equation}
\dot\phi_r
= 3\phi_r+\frac{3g^2}{2} (\phi_{r-1}+\phi_{r+1}).
\label{eq:app-BCC-linear-lattice}
\end{equation}
For a mode $\phi_r\propto e^{-\kappa(r-vt)}$, this yields
\begin{equation}
v(\kappa)=\frac{3[1+g^2\cosh\kappa]}{\kappa},
\label{eq:app-BCC-dispersion}
\end{equation}
whose minimum is precisely $v_B $. Summing Eq.~\eqref{eq:app-BCC-linear-lattice} over $r$ gives $\dot{\bar{C}}_{\rm mf}=\gamma\bar{C}_{\rm mf}$, with $\gamma=3(1+g^2)$. The nonlinear factor in Eq.~\eqref{eq:BCC-mean-field} bounds $0 \le \phi_r\le3/4$ and converts the growing seed into a translating saturated interval.

For Fig.~\ref{fig:BCC-crossover}, we solve Eq.~\eqref{eq:BCC-mean-field} with an adaptive Runge--Kutta 5(4) method on a periodic chain of $801$ sites, using relative tolerance $10^{-9}$, absolute tolerance $10^{-14}$, and final time $t = 18$.  The front remains more than three hundred sites from the periodic boundary, so wraparound is negligible. The late comparison fixes the slope to $4v_B$ and fits only the effective delay $t_0$, giving $t_0=4.000$ over the displayed fit window; the quoted free-slope fit provides an independent numerical check of the asymptotic coefficient of the deterministic equation.  As for a pulled front, the front position also contains the Bramson shift  $-\tfrac{3}{2\kappa_*}\ln t + O(1)$, so a constant delay extracted over a finite window is necessarily window dependent.

\section{Static finite-temperature SYK chain}
\label{app:SYK}

This appendix contains statements at three different levels. The disorder identity $\E_J  D =\E_J \bar C$ is exact at finite $N,q,L,\beta$ for $N>2s$. The Lyapunov profile is a leading-large-$q$ effective-theory result for $N\gg q^2\gg1$, with the time-uniformity qualification stated in Section~\ref{sec:SYK-main}. The thermal plateau additionally requires the pointwise dynamical assumptions in Appendix~\ref{app:SYK-plateau}. The explicit late linear coefficient additionally assumes uniform all-orders control of the strong-coupling eikonal expansion. We keep these three statements separate below. The notation $\E_J$ denotes the average over the static Gaussian coupling tensors after the thermal trace has been taken for the fixed Hamiltonian $H_J$. We also keep separate the extrapolated pole-free scale $\tau_{\rm pf}$ and the nonlinear energy-momentum tensor front delay $\tau_B$, as they have different large-$N$ scalings at fixed $0 < \nu < 1$.

\subsection{Parity-even probes and the exact disorder identity}

First, write
\begin{align}
\frac{\pi\nu}{\cos(\pi\nu/2)}&=\beta\mathcal J,
\qquad
\mathcal J^2=\mathcal J_0^2+\mathcal J_1^2,
\nonumber\\
\eta&=\frac{\mathcal J_1^2}{\mathcal J^2},
\qquad 0< \eta \le 1.
\label{eq:app-SYK-parameters}
\end{align}
Let $A$ and $B$ be disjoint flavour sets with
\begin{equation}
|A|=|B|=s\in2\mathbb Z,
\qquad N>2s,
\qquad \frac{s}{q}\to\Delta>0.
\end{equation}
Define normalised Hermitian strings
\begin{equation}
o_{S,x}=2^{|S|/2}i^{|S|(|S|-1)/2}\prod_{a\in S}\chi_{x,a},
\qquad o_{S,x}^2=1.
\end{equation}
We take $w_x=o_{A,x}$ and $v_x=o_{B,x}$. Both probes are fermion-parity even.

For a fixed disorder realisation, we now define the centred pair commutator
\begin{equation}
M_{xy}^{(J)}(t)
=[w_x(t),v_y]-\langle[w_x(t),v_y]\rangle_{\beta,J}\one.
\label{eq:app-SYK-centred-pair}
\end{equation}
The isotropic Gaussian disorder law is invariant under independent local $SO(N)$ flavour rotations. Suppose first that $x\neq x'$. Choose one flavour in $A$ and one unused flavour outside $A\cup B$, and rotate by $\pi$ in that two-plane at site $x$. The transformation flips $w_x$, leaves every $v_y$ and $w_{x'}$ unchanged, and maps $J$ to an equally probable realisation $J^R$. Since
\begin{equation}
H_{J^R}=U_RH_JU_R^\dagger,
\qquad
\rho_{J^R}=U_R\rho_JU_R^\dagger,
\end{equation}
the centred operator transforms as
\begin{equation}
M_{xy}^{(J^R)}(t)
=-U_R   M_{xy}^{(J)}(t)   U_R^\dagger.
\end{equation}
The other pair transforms by the same conjugation but without a sign. If instead $x=x'$ and $y\neq y'$, the same argument uses a $\pi$ rotation of one flavour in $B$ and an unused flavour at site $y$. Therefore,
\begin{equation}
\E_J \langle M_{xy}^{(J)}(t),M_{x'y'}^{(J)}(t)\rangle_{\beta,J} = 0,
\qquad (x,y) \neq (x',y').
\label{eq:app-SYK-cross-zero}
\end{equation}
Summing Eq.~\eqref{eq:app-SYK-cross-zero} proves Eq.~\eqref{eq:SYK-exact-equality} exactly. The proof uses centring explicitly and therefore does not require an odd probe or a vanishing pair expectation in each realisation. 

\subsection{Pole-free Lyapunov mode}

Let $\kappa(p)$ denote the dimensionless growth exponent of the
momentum-$p$ scrambling mode, so that its real-time dependence is
$e^{\kappa(p)\tau}$, with $\tau=2\pi t/\beta$. The corresponding
physical growth rate is $\lambda(p)= (2\pi/\beta) \kappa(p)$.
Ref.~\cite{CHMS2023} finds
\begin{equation}
\kappa(p)
=
\frac{\nu}{2}
\left[
\sqrt{9-4\eta(1-\cos p)}-1
\right]
=
\nu-\frac{\nu\eta}{6}p^2+O(p^4).
\label{eq:app-SYK-kappa}
\end{equation}
In the large-$q$ eikonal approximation, an external string containing $O(q)$ fermions has a connected vertex of order $q^2/N$~\cite{CHMS2023}. After the factor $2\cos(\pi\kappa/2)$ cancels the one-string energy-momentum tensor pole, the remaining symmetric-contour vertex is finite at $p=0$. We denote its value, together with the normalisation of the external strings, by $\mathcal A_{\Delta,\nu,\eta}$. For $|r|=O(\sqrt{\alpha_{\rm SYK} t })$, $\alpha_{\rm SYK} t\gg1$ and $\sqrt{\alpha_{\rm SYK}t} \ll L$, the small-momentum Fourier saddle gives
\begin{align}
\E_JC_r(t)
&=\mathcal A_{\Delta,\nu,\eta}\frac{q^2}{N}
\frac{e^{\lambda_L t-r^2/(4\alpha_{\rm SYK}t)}}
{\sqrt{4\pi\alpha_{\rm SYK}t}}
[1+o(1)],
\nonumber\\
\alpha_{\rm SYK}
&=\frac{\pi\nu\eta}{3\beta}.
\label{eq:app-SYK-Gaussian}
\end{align}
with $\lambda_L=2\pi\nu/\beta$. Here, $\mathcal A_{\Delta, \nu,\eta}\ge 0$ collects the normalised external vertex and the finite symmetric-edge remainder.  For probes with nonzero overlap with the pole-free mode it is strictly positive and smooth at fixed nonsingular $0<\nu<1$; if the overlap vanishes, the displayed leading term is absent.

Equation~\eqref{eq:app-SYK-Gaussian} is a pre-scrambling asymptotic. Summing over $r$ removes the heat-kernel normalisation and, using Eq.~\eqref{eq:SYK-exact-equality}, gives Eq.~\eqref{eq:SYK-early}. At $r=0$, extrapolating the pole-free result by itself, the extra $\tau^{-1/2}$ factor defines the scale
\begin{equation}
\nu \tau_{\rm pf}
=\ln\frac{N}{q^2}+ \frac{1}{2} \ln \tau_{\rm pf}+O(1).
\label{eq:app-SYK-Tpf}
\end{equation}
At strict $N=\infty$ and fixed $t$, the pair commutator remains infinitesimal; a nontrivial late-time limit must keep $N$ large but finite while time grows logarithmically with $N$.

\subsection{Saturated thermal pair value}
\label{app:SYK-plateau}

With the normalisation $\chi_a^2=1/2$, the large-$q$ Euclidean propagator is~\cite{MaldacenaStanford2016}
\begin{align}
G(\tau_E)
&=\frac{1}{2} \operatorname{sgn}(\tau_E)
\exp\!\left[\frac{g_\beta (\tau_E)}{q}+O(q^{-2})\right],
\nonumber\\
e^{g_\beta(\tau_E)}
&=\left[
\frac{\cos(\pi\nu/2)}
{\cos\!\left(\pi\nu( \frac{1}{2} - \frac{|\tau_E|}{\beta})\right)}
\right]^2.
\label{eq:app-SYK-two-point}
\end{align}
At half the thermal circle,
\begin{equation}
2G(\beta/2)=
\left[\cos\!\left(\frac{\pi\nu}{2}\right)\right]^{2/q}
[1+O(q^{-2})].
\label{eq:app-SYK-half-circle}
\end{equation}
For a string of distinct flavours, large-$N$ factorisation gives
\begin{align}
\norm{o_{S}}_\beta^2
&=\left\langle o_{S}(-i\beta/2)o_{S}(0)\right\rangle_\beta
\nonumber\\
&=[2G(\beta/2)]^{|S|}
\left[1+O(q^{-1})+O(q^2/N)\right]
\nonumber\\
&=\left[\cos\!\left(\frac{\pi\nu}{2}\right)\right]^{2\Delta_S}
[1+o(1)],
\qquad \Delta_S=\frac{|S|}{q}.
\label{eq:app-SYK-string-norm}
\end{align}

Expand the symmetric pair commutator into two time-ordered pieces and two OTOCs at the MSS strip edges.  Assume that, locally deep in a scrambled region, the resummed strip-edge OTOCs vanish, the two time-ordered pieces each factorise into $\norm{w}_\beta^2\norm{v}_\beta^2$, and the connected one-point subtraction in Eq.~\eqref{eq:app-SYK-centred-pair} is subleading. Under these assumptions,
\begin{equation}
C_r(t)\to
2\norm{w}_\beta^2\norm{v}_\beta^2+o(1),
\label{eq:app-SYK-factorised-plateau}
\end{equation}
and, since $\Delta_A=\Delta_B=\Delta$,
\begin{equation}
C_\infty
=2\left[\cos\!\left(\frac{\pi\nu}{2}\right)\right]^{4\Delta}[1+o(1)].
\label{eq:app-SYK-Cinf}
\end{equation}
This is a conditional  leading large-$N$, large-$q$ plateau statement, unlike the exact disorder identity in Eq.~\eqref{eq:SYK-exact-equality}. If $\epsilon_\nu=1-\nu\ll1$,
\begin{equation}
C_\infty
=2\left(\frac{\pi\epsilon_\nu}{2}\right)^{4\Delta}
[1+O(\epsilon_\nu)+o(1)].
\label{eq:app-SYK-Cinf-strong}
\end{equation}

\subsection{Linear pole cancellation and nonlinear energy-momentum tensor front}

Let $\mathcal F(z)=\Tr[\rho^{1/4} v \rho^{1/4} w(z) \rho^{1/4} v \rho^{1/4} w(z)]$, where $z$ denotes complex time. Expanding the symmetric commutator gives two time-ordered terms and the two OTOC values at $t\pm i\beta/4$. A one-string mode $e^{\kappa\tau}$ therefore enters the connected symmetric commutator multiplied by
\begin{equation}
2\cos\!\left(\frac{\pi\kappa}{2}\right).
\label{eq:app-SYK-edge-cos}
\end{equation}
The leading spatial OTOC contains $1/\cos(\pi\kappa/2)$ \cite{CHMS2023}, so the $\kappa=1$ energy-momentum tensor pole cancels at linear order. This is the SYK-chain realisation of the suppression of the maximal contribution to commutator squares discussed in Ref.~\cite{BlakeLiu2021}.

The cancellation concerns the term linear in the energy-momentum tensor exchange. In the coherent energy-momentum tensor sector of the leading eikonal approximation, the two strip edges carry opposite phases,
\begin{equation}
\mathcal F_\pm\simeq
\norm{w}_\beta^2\norm{v}_\beta^2 e^{\pm i\delta_{\rm eik}(r,t)},
\end{equation}
so their contribution is schematically
\begin{equation}
C_{\rm EM}(r,t)\simeq
2\norm{w}_\beta^2\norm{v}_\beta^2
[1-\cos\delta_{\rm eik}(r,t)].
\label{eq:app-SYK-eikonal-comm}
\end{equation}
For small $\delta_{\rm eik}$, Eq.~\eqref{eq:app-SYK-eikonal-comm} starts at order $\delta_{\rm eik}^2$. A complete resummation can also change the modulus of the strip-edge correlators, so Eq.~\eqref{eq:app-SYK-eikonal-comm} is only a schematic description of the transition region. When $|\delta_{\rm eik}|=O(1)$ the energy-momentum tensor sector can nevertheless become order one and set the saturation front.

Continuing Eq.~\eqref{eq:app-SYK-kappa} to $p=i\mu$ and setting $\kappa(i\mu_B)=1$ gives
\begin{equation}
\cosh\mu_B=
\frac{1+\nu+(\eta-2)\nu^2}{\eta\nu^2},
\qquad u_B\equiv\frac1{\mu_B}.
\label{eq:app-SYK-muB}
\end{equation}
The energy-momentum tensor contribution has spatial dependence $e^{\tau-\mu_B|r|}$ only on rays beyond the saddle--pole crossing,
\begin{equation}
\frac{|r|}{\tau}>\left.\partial_\mu\kappa(i\mu)\right|_{\mu_B}
=\frac{\eta\nu^2\sinh\mu_B}{\nu+2}.
\label{eq:app-SYK-pole-domain}
\end{equation}
It cannot be extrapolated to $r=0$ at fixed $\nu<1$ to perform the spatial sum. At second order, that sum sets the two exchanged momenta to $p$ and $-p$. The strip-edge factor is then $2\cos[\pi\kappa(p)]$, and the small-momentum saddle gives
\begin{equation}
\int dp\,e^{2\kappa(p)\tau}\propto\frac{e^{2\nu\tau}}{\sqrt{\tau}}.
\end{equation}
The second-order commutator coefficient is proportional to $-\cos(\pi\nu)$. It is therefore positive for $\nu>1/2$ and negative for $\nu<1/2$. Comparing its magnitude with Eq.~\eqref{eq:SYK-early} gives, away from $\nu=1/2$,
\begin{equation}
\nu\tau_\times=\ln\frac{N}{q^2}+\frac12\ln\tau_\times+O(1).
\label{eq:app-SYK-tcross}
\end{equation}
Near strong coupling, evaluating the residue at zero momentum requires $(1-\nu)\tau\gg1$. For the second-order term to dominate while the expansion remains perturbative, we require
\begin{equation}
\cos^2\!\frac{\pi\nu}{2}\ll
\frac{(q^2/N)e^{\nu\tau}}{\sqrt{\tau}}
\ll\cos\!\frac{\pi\nu}{2},
\end{equation}
at fixed $\eta,\Delta$. This saddle estimate does not apply in the opposite limit, $(1-\nu) \tau \ll 1$.

Near strong coupling, this branch is the true OTOC butterfly front~\cite{Gu:2018jsv,ChoiMezeiSarosi2021}:
\begin{equation}
v_B=\frac{2\pi}{\beta}u_B.
\label{eq:app-SYK-vB}
\end{equation}
Equation~\eqref{eq:app-SYK-muB} also has a direct interpretation in terms of the spatial MSS bound of Section~\ref{sec:lyapunov-volume}.  In the leading eikonal profile, the energy-momentum tensor pole limits analytic momentum continuation to $|\mu|<\mu_B$. This does not establish the convergence domain or control the MSS normalisation errors of the exact correlator.  Moreover, along the energy-momentum tensor front, the deficit behaves as $e^{\tau-\mu_B|r|}$, so that $\lambda_F(v)=2\pi/\beta-\mu_B|v|$ there.  Thus, when its hypotheses hold, the bound in Eq.~\eqref{eq:Legendre-MSS-main} is saturated at the edge of the admissible domain, $\mu\to\mu_B$, while the pole-free mode obeys $\frac{2\pi}{\beta}\kappa(i\mu)<2\pi/\beta$ strictly inside it.

For $\epsilon_\nu=1-\nu\ll1$,
\begin{equation}
u_B=\sqrt{\frac{\eta}{6\epsilon_\nu}}[1+O(\epsilon_\nu)],
\label{eq:app-SYK-uB-strong}
\end{equation}
and, using Eq.~\eqref{eq:app-SYK-parameters},
\begin{equation}
v_B=\frac{\pi}{\sqrt3}\sqrt{\frac{\eta\mathcal J}{\beta}}
\left[1+O((\beta\mathcal J)^{-1})\right].
\label{eq:app-SYK-vB-strong}
\end{equation}

\subsection{Logarithmic scales  and the late-time status}

The nonlinear energy-momentum tensor eikonal contains a separate delay. Including smooth residue and external-vertex factors, define
\begin{equation}
\tau_B=\ln\frac{N}{q^2}+O(1),
\qquad t_B=\frac{\beta \tau_B}{2\pi}.
\label{eq:app-SYK-TB}
\end{equation}
The front position is then
\begin{equation}
|r|=u_B(\tau-\tau_B)+o(\tau)
\label{eq:app-SYK-front-position}
\end{equation}
when $\tau-\tau_B$ is  large and the front ray satisfies Eq.~\eqref{eq:app-SYK-pole-domain}. By contrast, the extrapolated pole-free scale obeys Eq.~\eqref{eq:app-SYK-Tpf}. At fixed $0<\nu<1$,
\begin{equation}
\tau_{\rm pf}=\frac1\nu\ln\frac{N}{q^2}+O(\ln\ln N),
\qquad
\tau_B=\ln\frac{N}{q^2}+O(1),
\end{equation}
but $\tau_B$ is the extrapolated intercept of the far-distance front. It therefore cannot be used to order the onset of local nonlinear growth relative to $\tau_\times$ and $\tau_{\rm pf}$.

Suppose now that the full strong-coupling eikonal is uniformly saturated to Eq.~\eqref{eq:app-SYK-Cinf} throughout the interior of Eq.~\eqref{eq:app-SYK-front-position}, up to an $o(\tau)$ transition region, and that the total exterior contribution is $o(\tau)$. Then
\begin{align}
\E_J\bar{C}^{(L)}(t)
&=2u_BC_\infty(\tau-\tau_B)+o(\tau)
\nonumber\\
&=4\left[\cos\!\left(\frac{\pi\nu}{2}\right)\right]^{4\Delta}
 v_B(t-t_B)+o(t),
\label{eq:app-SYK-late-Csum}
\end{align}
and the exact disorder identity gives
\begin{equation}
\E_JD^{(L)}(t)=\E_J\bar{C}^{(L)}(t).
\label{eq:app-SYK-late-D}
\end{equation}
For $t/t_B\to\infty$, this reduces to Eq.~\eqref{eq:SYK-late-main} at the level of the leading slope.

The strong-coupling limit must be taken sequentially. We keep $0<\epsilon_\nu\ll1$ fixed while taking the large-$N$, late-time eikonal limit, and only then examine $\epsilon_\nu\to0$.  Setting $\nu=1$ first is singular. In this limit, the pole and saddle approximations are nonuniform, and the fixed-$\Delta$ thermal plateau vanishes.  With the sequential convention, Eqs.~\eqref{eq:app-SYK-Cinf-strong} and \eqref{eq:app-SYK-vB-strong} give the leading slope
\begin{equation}
\E_JD^{(L)}(t)
=\E_J\bar{C}^{(L)}(t)
\sim
\frac{4\pi}{\sqrt3}
\sqrt{ \frac{\eta\mathcal J}{\beta}}
\left(\frac{\pi}{\beta\mathcal J}\right)^{4 \Delta} t.
\label{eq:app-SYK-late-strong}
\end{equation}

The remaining qualification is uniformity of the nonlinear expansion. Ref.~\cite{CHMS2023} shows that additional multi-string vertices are softened by transverse momenta near maximal chaos. At each fixed string order this supports the interior strong-coupling expansion, but the estimate needed above must be uniform in string number and must also control the full transition and exterior regions. We do not prove those bounds. Accordingly, Eqs.~\eqref{eq:SYK-late-main} and \eqref{eq:app-SYK-late-strong} are leading strong-coupling eikonal predictions. The exact equality $\E_J D=\E_J\bar C$ and the calculated thermal norms do not require this all-orders assumption. The early Lyapunov formula has the leading-large-$q$ and pre-scrambling qualifications stated above. The plateau additionally requires the weaker pointwise dynamical assumptions in Appendix~\ref{app:SYK-plateau}. Moreover, the integrated late-time coefficient also requires spatially uniform control.

\section{One-sided thermal norm and the unregularised dOTOC}
\label{app:MSS-temp}

This appendix is included for comparison with the original one-sided thermal dOTOC and is not used in the proof of the one-volume theorem. The MSS theorem bounds an exponential mode when an appropriately normalised analytic OTOC has such a mode in its domain of validity. It does not imply a positive Lyapunov exponent, nor does it guarantee that a chosen commutator has nonzero overlap with the fastest mode. Equation~\eqref{eq:app-SYK-edge-cos} makes the second point explicit: the symmetric commutator vanishes on a single mode with $\lambda=2\pi/\beta$, consistently with Ref.~\cite{BlakeLiu2021}.

The one-volume theorem in Section~\ref{sec:one-volume} is specific to the symmetrically regularised thermal norm. For comparison, define the one-sided norm
\begin{equation}
\norm{X}_{\rho,2}^2=\Tr(\rho X^\dagger X)
=\norm{X\rho^{1/2}}_{S_2}^2
\end{equation}
and the corresponding connected unregularised dOTOC. The block inequality of Lemma~\ref{lem:L4} does not directly apply to this norm because the thermal weight sits on one side of the product.

A general but non-optimal bound is still available under a stronger complex-time assumption. Since
\begin{equation}
a\rho^{1/2}=\rho^{1/2}\tau_{-i\beta/2}(a),
\end{equation}
one has
\begin{equation}
\norm{Xa}_{\rho,2}
\le\norm{\tau_{-i\beta/2}(a)}_\infty\norm{X}_{\rho,2}.
\label{eq:app-unreg-mult}
\end{equation}
If the continuation of each fixed local density to $-i\beta/2$ is uniform in volume, ordinary Gibbs clustering gives an $O(R_{\rm LR}(t)^{d/2})$ one-source norm, while ordinary clustering of distant source commutators makes their covariance tail summable. The sitewise near-source count then yields
\begin{equation}
D^{(L)}_{\rm unreg} (t) \le K_\beta(1 + |t|)^{2d}.
\label{eq:app-unreg-t2d}
\end{equation}
In $d=1$, the required fixed complex-time continuation is available for finite-range interactions~\cite{Araki1969,PGPH2023}; in higher dimensions it holds in the corresponding high-temperature regime, while a volume-uniform complex-time disk is not automatic in general~\cite{Bouch2015}.  At $\beta=0$, the regularised and unregularised definitions coincide, so the sharper $t^d$ theorem applies. We do not claim that Eq.~\eqref{eq:app-unreg-t2d} is optimal at finite temperature.

\bibliographystyle{apsrev4-2}
\bibliography{biblio}

\end{document}